\documentclass[aps,prl,reprint,twocolumn, longbibliography]{revtex4-1}

\usepackage{graphicx}
\usepackage{amsmath}
\usepackage{amssymb}
\usepackage{amsthm}
\usepackage{thm-restate}
\usepackage{verbatim}
\usepackage{braket}
\usepackage[vcentermath]{youngtab}
\usepackage{tikz}
\usepackage[framemethod=tikz]{mdframed}
\usepackage{complexity}

\usepackage[margin=1in]{geometry}
\usepackage{subcaption}
\usepackage[utf8]{inputenc}

\usepackage{algorithm}
\usepackage{algpseudocode}
\usepackage[colorlinks = true]{hyperref}

\usepackage{xcolor}
\definecolor{darkred}  {rgb}{0.5,0,0}
\definecolor{darkblue} {rgb}{0,0,0.5}
\definecolor{darkgreen}{rgb}{0,0.5,0}

\hypersetup{
  urlcolor   = blue,         
  linkcolor  = darkblue,     
  citecolor  = darkgreen,    
  filecolor  = darkred       
}

\newcommand{\be}{\begin{equation}}
\newcommand{\ee}{\end{equation}}
\newcommand{\ba}{\begin{array}}
\newcommand{\ea}{\end{array}}
\newcommand{\bea}{\begin{eqnarray}}
\newcommand{\eea}{\end{eqnarray}}

\newcommand{\trace}[1]{{\mathrm{Tr}{#1}}}

\newclass{\FBPP}{FBPP}
\newclass{\QAPC}{QXC}

\newtheorem{lemma}{Lemma}

\newtheorem{theorem}{Theorem}
\newtheorem{claim}{Claim}
\newtheorem{problem}{Problem}
\newtheorem{corol}{Corollary}

\begin{document}
\title{Quantum complexity of the Kronecker coefficients}
\author{Sergey Bravyi$^1$}
\author{Anirban Chowdhury$^{2,3}$}
\author{David Gosset$^{2,3,4}$}
\author{Vojt\v{e}ch Havl\'{i}\v{c}ek$^{1}$}
\author{Guanyu Zhu$^1$}
\affiliation{$^1$ IBM Quantum, IBM T.J. Watson Research Center}
\affiliation{$^2$ Department of Combinatorics and Optimization, University of Waterloo}
\affiliation{$^3$ Institute for Quantum Computing, University of Waterloo}
\affiliation{$^4$ Perimeter Institute for Theoretical Physics, Waterloo}

\begin{abstract}

Whether or not the Kronecker coefficients of the symmetric group count some set of combinatorial objects is a longstanding open question. In this work we show that a given Kronecker coefficient is proportional to the rank of a projector that can be measured efficiently using a quantum computer. In other words a Kronecker coefficient counts the dimension of the vector space spanned by the accepting witnesses of a $\QMA$ verifier, where $\QMA$ is the quantum analogue of $\NP$. This implies that approximating the Kronecker coefficients to within a given relative error is not harder than a certain natural class of \textit{quantum approximate counting} problems that captures the complexity of estimating thermal properties of quantum many-body systems. A second consequence is that deciding positivity of Kronecker coefficients is contained in $\QMA$, complementing a recent $\NP$-hardness result of Ikenmeyer, Mulmuley and Walter.  We obtain similar results for the related problem of approximating row sums of the character table of the symmetric group. Finally, we discuss an efficient quantum algorithm that approximates normalized Kronecker coefficients to inverse-polynomial additive error. 
\end{abstract}
\maketitle

In this work we revisit a well-known mystery in combinatorics--- what do the Kronecker coefficients count?---from the perspective of quantum complexity theory.

The Kronecker coefficients are the analogues of the Clebsch-Gordan coefficients for the symmetric group. In particular, let $n\geq 2$ and let $\rho_\mu$ and $\rho_\nu$ be irreducible representations of the symmetric group labelled by partitions $\mu, \nu \vdash n$ respectively \footnote{We shall assume the reader is familiar with basic definitions concerning the representation theory of the symmetric group.}. The Kronecker coefficient $g_{ \mu \nu \lambda}$ is the multiplicity of the irreducible representation $\rho_\lambda$ in the tensor product representation $\rho_\mu \otimes \rho_\nu$:
\begin{align}
    \rho_\mu \otimes \rho_\nu \simeq \bigoplus_\lambda g_{\mu \nu \lambda } \rho_\lambda.
\label{eq:tensorrep}
\end{align}
where the $\simeq$ stands for isomorphism. 
Although we will not need it here, we note that a related notion of \textit{stretched} Kronecker coefficients \footnote{A stretched Kronecker coefficient is a family of Kronecker coefficients, where the number of boxes in every partition on every input can be multiplied by an arbitrary positive integer} plays an important role in the study of the quantum marginal problem~\cite{christandl05, christandl2007nonzero, klyachko04}.

The definition Eq.~\eqref{eq:tensorrep} of the Kronecker coefficients as multiplicities ensures that they are nonnegative integers. But it is a longstanding open problem in algebraic combinatorics---Problem~10 on Stanley's list \cite{stanley99positivityproblems}---to find a combinatorial formula for them. That is, \textit{do the Kronecker coefficients count some natural set of combinatorial objects?} Let us interpret ``natural" as any set of objects where the description size is polynomial in $n$ and such that membership in the set can be verified efficiently using a classical computer. In this way we are led to the complexity-theoretic question \cite{burgisser2008complexity,pak14,ikenmeyer2017vanishing, pak2020breaking, Ikenmeyer22b, paksurvey}: does $g_{\mu\nu\lambda}$ admit a $\#\P$ formula, i.e., does it count the number of accepting witnesses of an $\NP$ verifier? 

It will be convenient to define the problem of computing the Kronecker coefficients as follows. Here, and in other computational problems defined below, the input partitions $\mu,\nu,\lambda\in S_n$ are given in unary so that the input size is $O(n)$ \footnote{Some works in this area consider problems where the parts of each partition are specified in binary (i.e., Ref.~\cite{burgisser2008complexity}).}.
\begin{problem}[ExactKron]
Given $\mu,\nu,\lambda \vdash n$, compute $g_{\mu\nu\lambda}$. 
\end{problem}

There is little doubt that ExactKron is an extremely hard computational problem. It is known that any problem in $\#\P$ can be reduced in polynomial time to ExactKron  \cite{ikenmeyer2017vanishing}\footnote{See Remark 14 in Ref.~\cite{pak2020breaking}},  and that ExactKron is contained in the class $\GapP$ \cite{burgisser2008complexity} of functions that can be expressed as the difference $f-g$ of two functions $f$ and $g$ in $\#\P$ \cite{gapp}. In this sense, the Kronecker coefficients are as hard as $\#\P$, but also not much harder. Pinning down their complexity is therefore entirely concerned with the sliver of daylight between $\#\P$ and $\GapP$. Why should we care? 

 Firstly, as discussed above, if ExactKron is contained in $\#\P$ then this would resolve a significant open problem in combinatorics by showing that Kronecker coefficients count the elements of an efficiently verifiable set.  A second important (though less direct) consequence has to do with the complexity of approximating Kronecker coefficients to within a given relative error.  

\begin{problem}[ApproxKron]
Given $\mu,\nu,\lambda \vdash n$ and $\epsilon=\Omega(1/\poly(n))$, compute an estimate $\tilde{g}$, such that:
\[
(1-\epsilon)g_{\mu\nu\lambda} \leq \tilde{g}\leq (1+\epsilon) g_{\mu\nu\lambda}.
\]
\end{problem}

This problem is at least as hard as deciding if $g_{\mu \nu\lambda} >0$, which has been recently shown to be $\NP$-hard \cite{ikenmeyer2017vanishing}. It is also no harder than ExactKron. In order to discuss its complexity in more detail, we now review some facts about the relationship between exact and approximate counting problems more generally. 

Some problems in $\GapP$ are just as hard to approximate as they are to compute exactly (with respect to polynomial-time reductions). This is the case, for example, for the problems of computing output probabilities of polynomial-size quantum circuits \cite{goldberg2017complexity}, or computing the squared permanent of a real matrix \cite{aaronson2011computational}, or computing the square of a given irreducible character of the symmetric group \cite{Ikenmeyer22}. Approximating these quantities to within a given relative error is $\#\P$-hard. 
\begin{table*}[t!]
\centering
\begin{tabular}{ |l|c|c|} 
 \hline
& Problem in...& Approximation problem upper bound...\\
\hline
Classical counting & \#\P & $\FBPP^{\NP}$ \\
\hline 
 Quantum counting  & $\#\BQP$ & $\QAPC$\\ 
\hline
Gap counting & \GapP & $\#\P$ \\ 
 \hline
\end{tabular}
\caption{\label{3types} Three classes of counting problems that are polynomial-time reducible to one another. The associated approximation problems are very unlikely to be polynomial-time equivalent. Classical counting problems in $\#\P$ can be approximated within the third level of the polynomial hierarchy via Stockmeyer's approximate counting algorithm. In contrast, some problems in $\GapP$ can be $\#\P$-hard to approximate. Quantum approximate counting problems lie somewhere between these two extremes.}
\end{table*}
On the other hand, Stockmeyer has shown that the hardest functions in $\#\P$ are \textit{vastly easier to approximate} assuming a standard conjecture in complexity theory \cite{stockmeyer1983complexity}. In particular, if a function is contained in $\#\P$, then it can be approximated to a given relative error using a polynomial-time randomized algorithm that has access to an $\NP$ oracle, i.e., in the class $\FBPP^{\NP}$. Problems in this class cannot be $\#\P$-hard unless the polynomial hierarchy collapses, which is believed to be very unlikely. 

So far we have considered two classes of counting problems---$\GapP$ and $\#\P$ (which are polynomial-time reducible to one another) as well as the approximation tasks associated with them (which are unlikely to be polynomial-time equivalent). A central message of this work and Ref.~\cite{bravyi2022quantum} is that in between these two extremes there is a rich class of approximation tasks associated with a third type of \textit{quantum} counting problem.

 By analogy with the definition of $\#\P$, Refs.~\cite{shi2009note,brown2011computational} defined the class $\#\BQP$ of functions that can be thought of informally as counting the number of accepting witnesses of a $\QMA$ verifier. Here $\QMA$ is a quantum analogue of $\NP$--- a class of problems for which yes instances can be verified with high probability using a polynomial-sized quantum computation. A quantum verification circuit $C$ takes as input a quantum state $\ket{\psi}$ as well as some number $a$ of ancilla qubits in the $|0\rangle$ state. At the end of the circuit the first qubit is measured and the circuit accepts the input state $|\psi\rangle$ if and only if the outcome $1$ is observed. This occurs with probability $p(\psi)=\||(|1\rangle\langle1|\otimes I) C|\psi\rangle|0^{a}\rangle\|^2.$ 

For any quantum verification circuit we define the POVM element corresponding to the accept outcome
\begin{equation}
A\equiv(I\otimes\langle 0^{a}|)C^{\dagger} (|1\rangle\langle1|\otimes I) C(I\otimes |0^{a}\rangle), 
\label{eq:POVM}
\end{equation}
and for $q\in [0,1]$ we define $N_q$ to be the number of eigenvalues of $A$ that are at least $q$ (informally, the number of witnesses accepted with probability at least $q$). $\#\BQP$ is the class of problems that can be expressed as: given a quantum verification circuit with an $n$-qubit input register and total size $\mathrm{poly}(n)$, along with two thresholds $0<b<a\leq 1$ satisfying $a-b=\Omega(1/\mathrm{poly}(n))$, compute an estimate $N$ such that $N_a\leq N\leq N_b$. If we are assured that there are no ``mediocre witnesses" ~\cite{brown2011computational}, i.e., eigenvalues of $A$ in the interval $[b,a)$, then $N=N_a$ is the number of witnesses accepted with probability at least the given threshold acceptance probability $a$. We note that this definition of $\#\BQP$, allowing for a ``grace interval'' of miscounted witnesses \cite{brown2011computational}, was suggested in Ref.~\cite{brown2011computational} but passed over in favor of a slightly different definition. Though our results do not rest on this technical choice, we find the above to be the most natural definition for our purposes (see the Supplementary Material for more details).

The corresponding class $\QAPC$ of \textit{quantum approximate counting problems} \cite{bravyi2022quantum} is then the approximation version of $\#\BQP$. 
Here the goal is to approximate the number of accepting witnesses of a $\QMA$ verifier
within a small relative error.
That is, given an $n$-qubit quantum verification circuit of size $\mathrm{poly}(n)$, an error parameter $\delta=\mathrm{poly}(1/n)$ and two thresholds $0<b<a\leq 1$ and $a-b=\Omega(1/\mathrm{poly}(n))$, compute an estimate $N$ such that $(1-\delta) N_a\leq N\leq N_b(1+\delta)$.  

Quantum approximate counting was studied in Ref.~\cite{bravyi2022quantum} where it was shown that a wide class of problems that arise naturally in quantum many-body physics are equivalent to $\QAPC$ under polynomial-time reductions \footnote{In this paper we have changed the name to $\QAPC$ from $\mathrm{QAC}$ as in Ref.~\cite{bravyi2022quantum} to avoid confusion with other quantum complexity classes}. These include the problem of approximating the partition function of a quantum many-body system described by a local Hamiltonian (to within a given relative error), as well as the problem of computing expected values of local observables in the thermal (Gibbs) state at a given temperature (to within a prescribed additive error). Despite the centrality of these problems in physics, the complexity of quantum approximate counting is largely an open question. The main challenge is that there does not appear to be a natural quantum generalization of Stockmeyer's approximate counting method (a similar obstacle was encountered in attempting to generalize the Valiant-Vazirani theorem \cite{aharonov2022pursuit, jain2012power}). 
All we can say is that quantum approximate counting is at least $\QMA$-hard~\footnote{Note 
that  approximating the number of accepting witnesses of a $\QMA$ verifier 
with a small relative error allows one to check whether the verifier accepts at least one witness.}
and at most $\#\P$-hard (because quantum exact counting is 
as hard as classical exact counting~\cite{brown2011computational}). Nevertheless we conjecture that quantum approximate counting is not $\#\P$-hard---i.e., that it defines an intermediate class that lies somewhere between classical approximate counting and $\#\P$. Tab.~\ref{3types} summarizes the three types of counting problems discussed above, and their approximation versions.

In this work we describe an efficient quantum circuit that measures a projector with rank $d_{\mu}d_{\nu}d_{\lambda} g_{\mu\nu\lambda}$. Since the dimension $d_{\omega}$ of any irreducible representation $\omega$ of $S_n$ can be computed efficiently via the hook-length formula, this shows that $g_{\mu\nu\lambda}$ is equal to an efficiently computable dimensional factor $(d_{\mu}d_{\nu}d_{\lambda})^{-1}$ times a $\#\BQP$ function. Therefore the problem of approximating Kronecker coefficients to within a given relative error (ApproxKron) is not harder than $\QAPC$. This furnishes one of the first natural computational problems that is polynomial-time reducible to quantum approximate counting (and not known to be easier) but whose definition does not involve quantum many-body systems. The only other problem of this kind that we are aware of involves approximating Betti numbers of a cochain complex \cite{cade2021complexity}\footnote{In particular, the approximation version of the problem described in Theorem 5 of Reference \cite{cade2021complexity}}. Our results make use of Beals' efficient quantum Fourier transform over the symmetric group \cite{beals1997quantum} and, in particular, its use in the \textit{generalized phase estimation algorithm} \cite{harrow05} which we now review.

Let $\rho$ be a representation of the symmetric group $S_n$ such that for every permutation $\sigma\in S_n$,  $\rho(\sigma)$ is a unitary operator acting on a Hilbert space $\mathcal{H}$.  For any irreducible representation of the symmetric group labeled by a partition $\lambda \vdash n$, we define an operator acting on $\mathcal{H}$ as
\begin{equation}
\Pi_{\lambda}=\frac{d_{\lambda}}{n!}\sum_{\alpha\in S_n} \chi_{\lambda}(\alpha) \rho(\alpha).
\label{eq:pilambda}
\end{equation}
Here $\chi_{\lambda}:S_n\rightarrow \mathbb{C}$ is the character of $S_n$ corresponding to the irreducible representation labeled by $\lambda$ (it is a nontrivial fact that $\chi_{\lambda}$ is in fact integer-valued \cite{james1984}), and $d_\lambda$ is its dimension. Also note that the operator Eq.~\eqref{eq:pilambda} depends on the representation $\rho$, although this is suppressed in our notation.

The operators $M_{\rho}\equiv\{\Pi_{\lambda}: \lambda \vdash n\}$ define a projective measurement. That is, as shown in the Supplementary Material, the $\Pi_{\lambda}$ are mutually orthogonal Hermitian projectors that sum to the identity. 

In fact, this measurement can almost always (i.e., for most representations $\rho$) be implemented efficiently on a quantum computer. First consider the simple case when $\mathcal{H}=\mathbb{C}S_n$ and $\rho$ is the left-regular representation acting as $\rho(\alpha)|\pi\rangle=|\alpha\pi\rangle$ for any permutations $\alpha,\pi\in S_n$. Let us denote the corresponding projectors Eq.~\eqref{eq:pilambda} as $\Pi^{L}_{\lambda}$ and the measurement as $M_{L}=\{\Pi_{\lambda}^{L}\}$. This measurement---also known as weak Fourier sampling over the symmetric group--- can be implemented using a simple circuit depicted in Fig.~\ref{fig:weakfourier}. Here $\mathrm{QFT}_n$ is the unitary corresponding to Beals' quantum Fourier transform over the symmetric group \cite{beals1997quantum}. It is defined in terms of Young's orthogonal representation of $S_n$ which describes an irreducible matrix representation of $S_n$, denoted $\rho_{\omega}$, for each $\omega\vdash n$, with certain desirable properties. For any function $f:S_n \rightarrow \mathbb{C}$ we have
 \begin{equation}
 \mathrm{QFT}_n\sum_{\sigma\in S_n} f(\sigma)|\sigma\rangle=\sum_{\omega \vdash n} \sqrt{\frac{d_{\omega}}{n!}} \sum_{i,j=1}^{d_{\omega}} (\hat{f}(\omega))_{ij}  |\omega, i,j\rangle, 
\label{eq:ft1}
\end{equation}
where
\begin{equation}
 \hat{f}(\omega)=\sum_{\alpha\in S_n} f(\alpha)\rho_\omega(\alpha).
\label{eq:ft2}
 \end{equation}
(Note that basis vectors of $\mathcal{H}$ can be labeled by triples from the set $T=\{(\omega, i,j): \omega\vdash n, 1\leq i,j\leq d_{\omega}\}$, since $|T|=\sum_{\omega\vdash n} d_{\omega}^2 =|S_n|$).

For other unitary representations $\rho$ of $S_n$, it is possible to implement the measurement $M_{\rho}$ using the \textit{generalized phase estimation} \cite{harrow05} circuit shown in Fig.~\ref{fig:gpe}, which uses weak Fourier sampling as a subroutine. It further requires that we can perform the $\mathrm{C}-\rho$ gate and its inverse, where $(\mathrm{C-}\rho)|\alpha\rangle|\psi\rangle=|\alpha\rangle \rho(\alpha)|\psi\rangle$ for any permutation $\alpha\in S_n$ and $\psi\in \mathcal{H}$.

\begin{figure*}[t!]
\begin{flushleft}
\begin{subfigure}{0.35\textwidth}
\begin{tikzpicture}
\draw (0,0)--(1,0);
\draw(1, -1) rectangle (2,1);
\node at (1.5,0) {$\mathrm{QFT}_n$};
\draw (2,0.5)--(4,0.5);
\draw (2,0)--(4,0);
\draw (2,-0.5)--(4,-0.5);

\node at (2.2,0.65) {$\omega$};
\node at (2.2,0.15){$i$};
\node at (2.2, -0.3) {$j$};
\draw (4, -1) rectangle (5,1);
\node at (4.5,0) {$\mathrm{QFT}_n^{\dagger}$};
\draw (5,0)--(6,0);
\node at (1,1.5) {$|0\rangle$};
\node at (-0.5,0) {$|\psi\rangle$};
\draw (1.2,1.5)--(4,1.5);
\draw (4, 1.3) rectangle (5,1.8);
\draw[black,fill=black] (3,0.5) circle (0.5ex);
\draw (3,1.64)--(3,0.5);
\draw[black] (3,1.5) circle (0.13cm);
\draw[black] (4.2,1.5) arc (150:30:0.30cm);
\draw [-stealth] (4.4,1.5)--(4.6,1.75);
\draw (5,1.48)--(6,1.48);
\draw (5,1.55)--(6,1.55);
\node at (6.2, 1.5) {$\omega$};
\node at (6.5,0){$\Pi_{\omega}^L|\psi\rangle$};
\draw[thick,dotted] (0.5,-1.5) rectangle (5.5,2);
\end{tikzpicture}
\end{subfigure}
\hspace{3cm}
\begin{subfigure}{0.35\textwidth}
\begin{tikzpicture}
\draw (-0.5,0)--(1,0);
\draw (1,-0.5) rectangle (2,0.5);

\draw[black,fill=black] (1.5,1) circle (0.5ex);
\draw (1.5,1)--(1.5,0.5);
\draw (2,0)--(4,0);
\draw[black,fill=black] (4.5,1) circle (0.5ex);
\draw(4.5,1)--(4.5,0.5);
\draw (4,-0.5) rectangle (5,0.5);
\draw (5,0)--(5.5,0);
\draw (0.9,1)--(2.6,1);
\draw (2.6,.6) rectangle (3.4,1.75);
\draw (3.4, 1.43)--(5.5,1.43);
\draw (3.4, 1.5)--(5.5,1.5);
\draw (3.4,1)--(5.5,1);
\node at (3,1.25) {$M_L$};
\node at (1.5,0) {$\rho^{\dagger}$};
\node at (4.5,0) {$\rho$};
\node at (-1,0){$|\phi\rangle$};
\node at (6,0){$\Pi_{\omega}|\phi\rangle$};
\node at (6,1.48){$\omega$};

\draw (0, 0.6) rectangle (0.9,1.75);
\node at (0.45,1.25) {$\mathrm{QFT}_n^{\dagger}$};
\draw (-0.5,1.1)--(0,1.1);
\draw (-0.5,1.4)--(0,1.4);
\draw (-0.5,0.8)--(0,0.8);
\node at (-1,1.4) {$|\tau\rangle$};
\node at (-1,1.1) {$|1\rangle$};
\node at (-1,0.8) {$|1\rangle$};
\end{tikzpicture}
\end{subfigure}
\caption{
(a) Weak Fourier sampling circuit that performs the projective measurement $M_{L}=\{\Pi_{\omega}^{L}: \omega\vdash n\}$.\label{fig:weakfourier}
(b) Generalized phase estimation algorithm\label{fig:gpe}. Here $\tau$ is the trivial representation of $S_n$, such that $\mathrm{QFT}_n^{\dagger}|\tau,1,1\rangle=\frac{1}{\sqrt{n!}}\sum_{\alpha\in S_n}|\alpha\rangle$.
}
\end{flushleft}
\end{figure*}
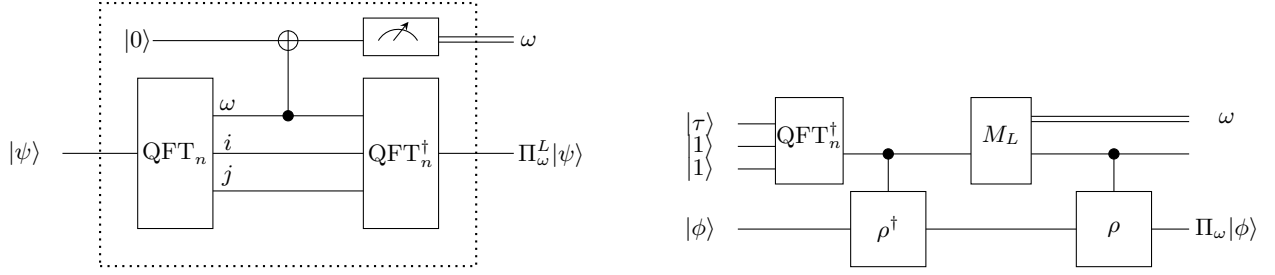

We are now ready to describe our main results.  Again let us consider the left-regular representation of $S_n$.  For any three partitions $\mu,\nu,\lambda$ of $n$, we may define the projectors $\Pi^L_{\mu}, \Pi^L_{\nu}, \Pi^L_{\lambda}$ using Eq.~\eqref{eq:pilambda} where $\rho$ is the left regular representation of $S_n$.These operators act on the Hilbert space $\mathcal{H}=\mathbb{C}S_n$.

Next consider $\mathcal{H}^{\otimes 3}$ and let 
\begin{equation}
Q\equiv\frac{1}{n!} \sum_{\sigma\in S_n} \sigma \otimes \sigma \otimes \sigma.
\label{eq:Q}
\end{equation}
Then it is easy to verify that $Q$ is a hermitian projector, i.e.,  $Q^2=Q=Q^{\dagger}$. Moreover, $Q$ is a projector of the form Eq.~\eqref{eq:pilambda} where $\rho(\sigma)=\sigma\otimes \sigma\otimes \sigma$ is the threefold tensor product representation and $\lambda$ is the trivial (one-dimensional) irreducible representation $\tau$ such that $\rho_{\tau}(\sigma)=\chi_{\tau}(\sigma)=1$ for all $\sigma\in S_n$.

Using the fact that $[\sigma, \Pi^{L}_{\alpha}]=0$ for any $\sigma\in S_n$ and $\alpha \vdash n$, we get $
[Q,\Pi^{L}_{\mu}\otimes \Pi^{L}_{\nu}\otimes \Pi^{L}_{\lambda}]=0$. Let us then define
\[
P_{\mu\nu\lambda}\equiv Q\big(\Pi^{L}_{\mu}\otimes \Pi^{L}_{\nu}\otimes \Pi^{L}_{\lambda}\big)=\big(\Pi^{L}_{\mu}\otimes \Pi^{L}_{\nu}\otimes \Pi^{L}_{\lambda}\big)Q
\]
as the projector onto the simultaneous $+1$ eigenspace of $Q$ and $\Pi^{L}_{\mu}\otimes \Pi^{L}_{\nu}\otimes \Pi^{L}_{\lambda}$. Then 
\begin{lemma}$ g_{\mu\nu\lambda}=\frac{1}{d_{\mu}d_\nu d_\lambda}\mathrm{Tr}(P_{\mu\nu\lambda})$.
\label{lem:numkron}
\end{lemma}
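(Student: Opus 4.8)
The plan is to compute $\mathrm{Tr}(P_{\mu\nu\lambda})$ directly and recognize the result as a classical character-theoretic formula for the Kronecker coefficient. Let $\rho$ denote the left-regular representation of $S_n$ on $\calH=\CC S_n$, so that the operators appearing in the lemma are $\Pi^{L}_{\omega}=\tfrac{d_{\omega}}{n!}\sum_{\alpha}\chi_{\omega}(\alpha)\rho(\alpha)$ and $Q=\tfrac{1}{n!}\sum_{\sigma}\rho(\sigma)\otimes\rho(\sigma)\otimes\rho(\sigma)$. Expanding the definition of $P_{\mu\nu\lambda}$ and using $(A\otimes B\otimes C)(D\otimes E\otimes F)=AD\otimes BE\otimes CF$ gives
\begin{equation}
P_{\mu\nu\lambda}=\frac{1}{n!}\sum_{\sigma\in S_n}\big(\rho(\sigma)\Pi^{L}_{\mu}\big)\otimes\big(\rho(\sigma)\Pi^{L}_{\nu}\big)\otimes\big(\rho(\sigma)\Pi^{L}_{\lambda}\big),
\end{equation}
so that, by the factorization $\mathrm{Tr}(A\otimes B\otimes C)=\mathrm{Tr}(A)\,\mathrm{Tr}(B)\,\mathrm{Tr}(C)$, it suffices to evaluate $\mathrm{Tr}\big(\rho(\sigma)\Pi^{L}_{\omega}\big)$ for a single partition $\omega\vdash n$.

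Next I would carry out that evaluation using Eq.~\eqref{eq:pilambda}: we have $\rho(\sigma)\Pi^{L}_{\omega}=\tfrac{d_{\omega}}{n!}\sum_{\alpha}\chi_{\omega}(\alpha)\,\rho(\sigma\alpha)$, and the character of the left-regular representation satisfies $\mathrm{Tr}\,\rho(\beta)=n!\,[\beta=e]$. Hence only the term $\alpha=\sigma^{-1}$ contributes, and $\mathrm{Tr}\big(\rho(\sigma)\Pi^{L}_{\omega}\big)=d_{\omega}\,\chi_{\omega}(\sigma^{-1})=d_{\omega}\,\chi_{\omega}(\sigma)$, where the last step uses that the characters of $S_n$ are real-valued. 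Substituting back,
\begin{equation}
\mathrm{Tr}(P_{\mu\nu\lambda})=d_{\mu}d_{\nu}d_{\lambda}\cdot\frac{1}{n!}\sum_{\sigma\in S_n}\chi_{\mu}(\sigma)\,\chi_{\nu}(\sigma)\,\chi_{\lambda}(\sigma).
\end{equation}

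Finally I would identify the remaining sum with $g_{\mu\nu\lambda}$. The character of $\rho_{\mu}\otimes\rho_{\nu}$ is the pointwise product $\chi_{\mu}\chi_{\nu}$, so by Eq.~\eqref{eq:tensorrep} and orthonormality of irreducible characters, $g_{\mu\nu\lambda}=\langle\chi_{\mu}\chi_{\nu},\chi_{\lambda}\rangle=\tfrac{1}{n!}\sum_{\sigma}\chi_{\mu}(\sigma)\chi_{\nu}(\sigma)\overline{\chi_{\lambda}(\sigma)}$, and once more dropping the conjugation by reality of $\chi_{\lambda}$ yields exactly the displayed sum. Dividing by $d_{\mu}d_{\nu}d_{\lambda}$ completes the proof.

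I do not expect a genuine obstacle: the argument is a short computation whose only delicate points are the value of the regular-representation character and the standard facts that $S_n$-characters are real and that $g_{\mu\nu\lambda}$ admits the symmetric expression $\tfrac{1}{n!}\sum_{\sigma}\chi_{\mu}\chi_{\nu}\chi_{\lambda}$. A more conceptual route avoids characters entirely: since $\Pi^{L}_{\omega}\CC S_n$ carries $d_{\omega}$ copies of $\rho_{\omega}$, the operator $\Pi^{L}_{\mu}\otimes\Pi^{L}_{\nu}\otimes\Pi^{L}_{\lambda}$ projects onto a subspace isomorphic to $(\rho_{\mu}\otimes\rho_{\nu}\otimes\rho_{\lambda})\otimes(\CC^{d_{\mu}}\otimes\CC^{d_{\nu}}\otimes\CC^{d_{\lambda}})$ with $S_n$ acting diagonally on the first factor, and $Q$ then projects onto the $S_n$-invariant part; its dimension is $\dim\mathrm{Hom}_{S_n}(\mathbf{1},\rho_{\mu}\otimes\rho_{\nu}\otimes\rho_{\lambda})=g_{\mu\nu\lambda}$ by self-duality of $S_n$-irreps, and the multiplicity factor $d_{\mu}d_{\nu}d_{\lambda}$ gives the stated identity. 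Either way the content is the same; the first route is cleanest to write out.
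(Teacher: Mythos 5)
Your main argument is correct and follows essentially the same route as the paper: expand $\mathrm{Tr}(P_{\mu\nu\lambda})$, use that the left-regular representation has trace $n!$ on the identity and $0$ elsewhere, and recognize the resulting sum as the classical formula $g_{\mu\nu\lambda}=\tfrac{1}{n!}\sum_{\sigma}\chi_{\mu}(\sigma)\chi_{\nu}(\sigma)\chi_{\lambda}(\sigma)$. The alternative "conceptual" route you sketch at the end (decompose $\Pi^{L}_{\mu}\otimes\Pi^{L}_{\nu}\otimes\Pi^{L}_{\lambda}$ into irreducibles and count the trivial component) is a legitimate, more structural variant, but your primary computation already matches the paper's proof.
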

\begin{proof}
 Taking the trace in Eq.~\eqref{eq:tensorrep} we see that $\chi_\mu(\sigma) \chi_\nu(\sigma) = \sum_\lambda g_{\mu  \nu \lambda } \chi_\lambda(\sigma)$ for all $\sigma \in S_n$. Using orthogonality of the irreducible characters, we then obtain the well-known formula 
\begin{equation}
\label{eq:g}
g_{\mu \nu \lambda} = \frac{1}{n!}\sum_{\sigma \in S_n} \chi_\mu(\sigma) \chi_\nu(\sigma) \chi_\lambda(\sigma).
\end{equation}
We can evaluate $\mathrm{Tr}(P_{\mu\nu\lambda})$ using Eqs.~(\ref{eq:pilambda},\ref{eq:Q}) and the fact that in the left-regular representation $\mathrm{Tr}(\pi)$ is zero unless $\pi$ is the identity permutation:
\begin{align*}
&\frac{1}{d_{\mu}d_{\nu}d_{\lambda}}\mathrm{Tr}(P_{\mu\nu\lambda})=\frac{1}{d_{\mu}d_{\nu}d_{\lambda}}\mathrm{Tr}\left(Q\big(\Pi^{L}_{\mu}\otimes \Pi^{L}_{\nu}\otimes \Pi^{L}_{\lambda}\big)\right)\\
&=\frac{1}{n!}\sum_{\substack{\sigma,\alpha \\ \beta,\gamma}} \delta_{\sigma,\alpha^{-1}}\delta_{\sigma,\beta^{-1}} \delta_{\sigma, \gamma^{-1}}\chi_{\mu}(\alpha)\chi_\nu(\beta)\chi_\lambda(\gamma),
\end{align*}
which coincides with Eq.~\eqref{eq:g}.
\end{proof}

\begin{theorem}
For any partitions  $\mu,\nu,\lambda\vdash n$ there exists a quantum circuit of size $\mathrm{poly}(n)$ that implements a projective measurement $\{P_{\mu\nu\lambda}, I-P_{\mu\nu\lambda}\}$. 
\label{thm:main}
\end{theorem}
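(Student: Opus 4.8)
The plan is to realize $P_{\mu\nu\lambda}$ as the accept projector of a two-stage measurement, each stage of which can be implemented using the generalized phase estimation primitive reviewed above. Concretely, $P_{\mu\nu\lambda} = Q(\Pi^L_\mu\otimes\Pi^L_\nu\otimes\Pi^L_\lambda)$ is the product of two commuting projectors, so it suffices to exhibit an efficient projective measurement for each factor and then compose them; since the two factors commute, measuring one and then the other and accepting only if both outcomes are the ``right'' ones projects exactly onto the simultaneous $+1$ eigenspace, i.e.\ onto $P_{\mu\nu\lambda}$.

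First I would handle the factor $\Pi^L_\mu\otimes\Pi^L_\nu\otimes\Pi^L_\lambda$. Each tensor factor $\Pi^L_\alpha$ is just the weak Fourier sampling projector $M_L$ of Fig.~\ref{fig:weakfourier} restricted to outcome $\alpha$, and by Eq.~\eqref{eq:ft1} this circuit is nothing but $\mathrm{QFT}_n$, reading the label register, and $\mathrm{QFT}_n^\dagger$, all of which Beals' algorithm \cite{beals1997quantum} performs in $\mathrm{poly}(n)$ gates. Running this circuit independently on each of the three copies of $\mathcal{H}=\mathbb{C}S_n$, recording the three partition labels, and declaring ``accept'' iff they equal $\mu,\nu,\lambda$ respectively, implements $\{\Pi^L_\mu\otimes\Pi^L_\nu\otimes\Pi^L_\lambda,\ I-\Pi^L_\mu\otimes\Pi^L_\nu\otimes\Pi^L_\lambda\}$ efficiently.

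Next I would handle $Q=\frac{1}{n!}\sum_\sigma \sigma\otimes\sigma\otimes\sigma$. As noted in the text, $Q$ is an operator of the form Eq.~\eqref{eq:pilambda} for the representation $\rho(\sigma)=\sigma\otimes\sigma\otimes\sigma$ on $\mathcal{H}^{\otimes 3}$ with $\lambda=\tau$ the trivial representation, so the measurement $M_\rho$ is implemented by the generalized phase estimation circuit of Fig.~\ref{fig:gpe}, accepting iff the Fourier label register outputs $\tau$. For this I need to realize the controlled gate $(\mathrm{C}\text{-}\rho)|\alpha\rangle|\phi\rangle = |\alpha\rangle\,(\alpha\otimes\alpha\otimes\alpha)|\phi\rangle$ and its inverse efficiently: given the permutation $\alpha$ written in the control register (as a product of transpositions, say), applying the left-regular action $|\pi\rangle\mapsto|\alpha\pi\rangle$ to each of the three copies is a poly$(n)$-size reversible classical permutation of basis states, hence an efficient quantum circuit. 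Composing the $Q$-measurement with the preceding $\Pi^L$-measurement — which is legitimate precisely because $[Q,\Pi^L_\mu\otimes\Pi^L_\nu\otimes\Pi^L_\lambda]=0$, as established before the lemma — and accepting iff both stages accept yields the desired two-outcome measurement $\{P_{\mu\nu\lambda}, I-P_{\mu\nu\lambda}\}$.

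The main obstacle is the correctness and efficiency of the $\mathrm{C}\text{-}\rho$ gate for the threefold tensor representation, together with verifying that the generalized phase estimation circuit of Fig.~\ref{fig:gpe} does indeed implement $M_\rho=\{\Pi_\omega\}$ for a general (non-regular) representation $\rho$ — this is the content of \cite{harrow05} but should be spelled out, checking in particular that the $\Pi_\omega$ there coincide with Eq.~\eqref{eq:pilambda} and that only the $\mathrm{QFT}_n$, its inverse, and controlled applications of $\rho$ are needed. A secondary point to get right is that ``measure stage one, then stage two, accept iff both accept'' genuinely implements the product projector: because the two projectors commute they are simultaneously diagonalizable, so the sequential non-destructive measurement leaves exactly the simultaneous $+1$ eigenspace surviving with the correct probability $\|P_{\mu\nu\lambda}|\psi\rangle\|^2$, and the post-measurement state is the (normalized) projection — I would include a one-line argument for this. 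Everything else (unary input size $O(n)$, efficiency of $\mathrm{QFT}_n$, computing $d_\mu,d_\nu,d_\lambda$ by the hook-length formula) is routine.
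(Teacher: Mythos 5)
Your proposal follows essentially the same route as the paper's proof: perform weak Fourier sampling on each of the three $\mathbb{C}S_n$ registers and post-select on outcomes $\mu,\nu,\lambda$, then run generalized phase estimation for the threefold tensor product representation $\rho(\sigma)=\sigma\otimes\sigma\otimes\sigma$ and post-select on the trivial representation $\tau$, accepting iff both stages accept, with commutativity of $Q$ and $\Pi^L_\mu\otimes\Pi^L_\nu\otimes\Pi^L_\lambda$ guaranteeing the composite accept-POVM is $P_{\mu\nu\lambda}$. Your additional observations — that $\mathrm{C}\text{-}\rho$ is an efficient reversible classical permutation and that generalized phase estimation must be verified against Eq.~\eqref{eq:pilambda} — are precisely what the paper's supplementary material fills in.
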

\begin{proof}
The circuit will have an input register $\mathbb{C}S_n\otimes \mathbb{C}S_n\otimes \mathbb{C}S_n$.
First we use weak Fourier sampling to perform the measurement $M_L$ on each of the three registers and reject unless the outcomes are $\mu,\nu,\lambda$ respectively. Then we use generalized phase estimation with the threefold tensor product representation $\rho(\sigma)=\sigma\otimes \sigma\otimes \sigma$  and reject unless the outcome is the trivial representation $\tau$. If the circuit does not reject in either step, then it accepts. The corresonding POVM element is the product of the POVM elements for acceptance in each step individually, that is $ Q\big(\Pi^{L}_{\mu}\otimes \Pi^{L}_{\nu}\otimes \Pi^{L}_{\lambda})=P_{\mu\nu\lambda}$.
\end{proof}

\begin{corol}
The function $(\mu,\nu,\lambda)\mapsto d_{\mu}d_{\nu}d_{\lambda} g_{\mu\nu\lambda}$ is in $\#\BQP$, and ApproxKron is polynomial-time reducible to $\QAPC$.
\label{corol:main}
\end{corol}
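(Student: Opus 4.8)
The plan is to package the measurement circuit of Theorem~\ref{thm:main} into a $\QMA$ verifier in the canonical form used to define $\#\BQP$, and then read off the claim from Lemma~\ref{lem:numkron}. First I would observe that the input register $\mathbb{C}S_n\otimes\mathbb{C}S_n\otimes\mathbb{C}S_n$ can be encoded on $m=O(n\log n)=\mathrm{poly}(n)$ qubits (this is exactly the qubit encoding on which Beals' $\mathrm{QFT}_n$ acts), so the resulting verifier has a $\mathrm{poly}(n)$-qubit input register and, by Theorem~\ref{thm:main}, $\mathrm{poly}(n)$ total size. Next I would turn the measurement circuit into a unitary $C$ acting on the input register together with $\mathrm{poly}(n)$ ancillas initialized to $|0\rangle$, using the principle of deferred measurement: the weak Fourier sampling steps, the comparisons of the $\omega$-registers against $\mu,\nu,\lambda$, and the generalized phase estimation step are all run coherently, writing a single bit into a designated output qubit that flags whether the circuit would have rejected. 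By construction, and by the POVM computation already carried out in the proof of Theorem~\ref{thm:main}, the accept POVM element of this verifier, defined as in Eq.~\eqref{eq:POVM}, is exactly $A=P_{\mu\nu\lambda}$.

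The key observation is then that $P_{\mu\nu\lambda}$ is a Hermitian projector, so $A$ has spectrum contained in $\{0,1\}$ and there are no ``mediocre witnesses'' for any thresholds. Concretely, take $a=1$ and $b=1/2$, so that $a-b=1/2=\Omega(1/\mathrm{poly}(n))$; then $N_a=N_b=\mathrm{rank}(P_{\mu\nu\lambda})=\mathrm{Tr}(P_{\mu\nu\lambda})$, and by Lemma~\ref{lem:numkron} this equals $d_\mu d_\nu d_\lambda\, g_{\mu\nu\lambda}$. Hence the $\#\BQP$ problem attached to this verifier and these thresholds has the single valid output $d_\mu d_\nu d_\lambda\, g_{\mu\nu\lambda}$, which establishes that $(\mu,\nu,\lambda)\mapsto d_\mu d_\nu d_\lambda\, g_{\mu\nu\lambda}$ is in $\#\BQP$. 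The degenerate case $g_{\mu\nu\lambda}=0$, where $P_{\mu\nu\lambda}=0$, is handled by the same argument.

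For the second assertion I would give a one-query reduction from ApproxKron to $\QAPC$. Given an ApproxKron instance $(\mu,\nu,\lambda,\epsilon)$, form the $\QAPC$ instance consisting of the verifier constructed above with thresholds $a=1$, $b=1/2$ and error parameter $\delta=\epsilon=\Omega(1/\mathrm{poly}(n))$; a $\QAPC$ solver then returns $N$ with $(1-\delta)N_a\le N\le(1+\delta)N_b$, i.e.\ $(1-\epsilon)\,d_\mu d_\nu d_\lambda\, g_{\mu\nu\lambda}\le N\le(1+\epsilon)\,d_\mu d_\nu d_\lambda\, g_{\mu\nu\lambda}$. Since $d_\mu,d_\nu,d_\lambda$ are computable in $\mathrm{poly}(n)$ time from the hook-length formula, outputting $\tilde g=N/(d_\mu d_\nu d_\lambda)$ solves ApproxKron, and the whole reduction runs in polynomial time. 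The only step that needs any care is the first one --- massaging the measurement circuit of Theorem~\ref{thm:main} into the exact syntactic form of a $\QMA$ verifier with a single accept qubit whose accept POVM element is precisely $P_{\mu\nu\lambda}$ --- but this is routine deferred-measurement bookkeeping, and the conventions surrounding the $\#\BQP$ definition (including the grace interval) are the ones discussed in the Supplementary Material.
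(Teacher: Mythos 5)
Your proof is correct and follows essentially the same route as the paper's: invoke Theorem~\ref{thm:main} to get a polynomial-size circuit measuring the projector $P_{\mu\nu\lambda}$, note that a projector has $\{0,1\}$ spectrum so there are no mediocre witnesses for thresholds $a=1$, $b=1/2$, use Lemma~\ref{lem:numkron} to identify $\mathrm{rank}(P_{\mu\nu\lambda})$ with $d_\mu d_\nu d_\lambda\, g_{\mu\nu\lambda}$, and divide out the efficiently computable dimensional factor. The extra bookkeeping you include (qubit encoding of $\mathbb{C}S_n^{\otimes 3}$ and coherent deferred-measurement form of the verifier) is just spelling out what the paper leaves implicit.
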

\begin{proof}
From Theorem \ref{thm:main} there is a polynomial size verification circuit that measures the projector $P_{\mu\nu\lambda}$. Since it is a projector, its eigenvalues are either zero or 1. By Lemma \ref{lem:numkron} the number of $+1$ eigenvalues is equal to $d_{\mu}d_{\nu}d_{\lambda} g_{\mu\nu\lambda}$. By definition it is a $\#\BQP$ problem to exactly count the number of $+1$ eigenvalues, and a $\QAPC$ problem to approximate the number of $+1$ eigenvalues to within a given relative error $\delta$ (for example, we can set $a=1$ and $b=1/2$ in the definitions of $\#\BQP$ and $\QAPC$).  Since the dimensional factor $d_{\mu}d_{\nu}d_{\lambda}$ is efficiently computable this shows that ApproxKron is polynomial time reducible to $\QAPC$.
\end{proof}

The efficient quantum circuit that measures the projector $P_{\mu\nu\lambda}$ (Theorem \ref{thm:main}) directly gives a $\QMA$ protocol for the problem of deciding positivity of Kronecker coefficients. Indeed, the projector $P_{\mu\nu\lambda}$ has nonzero rank if and only if $g_{\mu\nu\lambda}>0$. If $g_{\mu\nu\lambda}>0$ there exists a quantum state that results in the measurement outcome $+1$ while if $g_{\mu\nu\lambda}=0$ then all states give measurement outcome $0$. Note that this $\QMA$ protocol is deterministic in the sense that it has perfect completeness and soundness .
\begin{corol}
The problem of deciding positivity of Kronecker coefficients---given $\mu,\nu, \lambda\vdash n$, decide if $g_{\mu\nu\lambda}>0$--- is contained in $\QMA$.
\label{corol:qma}
\end{corol}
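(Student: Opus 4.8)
The plan is to use the verification circuit of Theorem~\ref{thm:main} essentially unchanged as a $\QMA$ verifier. On input $\mu,\nu,\lambda\vdash n$ (given in unary), the prover's witness is an arbitrary quantum state $|\psi\rangle$ on the register $\mathbb{C}S_n\otimes\mathbb{C}S_n\otimes\mathbb{C}S_n$; encoding each copy of $S_n$ takes $O(n\log n)$ qubits, so the witness has size $\poly(n)$. The verifier appends its ancillas, runs the $\poly(n)$-size circuit of Theorem~\ref{thm:main} implementing the projective measurement $\{P_{\mu\nu\lambda}, I-P_{\mu\nu\lambda}\}$, and accepts iff the $P_{\mu\nu\lambda}$ outcome is observed. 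Because $P_{\mu\nu\lambda}$ is a Hermitian projector (as noted before Lemma~\ref{lem:numkron}), the acceptance probability on witness $|\psi\rangle$ is exactly $\langle\psi|P_{\mu\nu\lambda}|\psi\rangle\in[0,1]$, and it equals $1$ precisely when $|\psi\rangle$ lies in the range of $P_{\mu\nu\lambda}$.

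Next I would verify completeness and soundness. By Lemma~\ref{lem:numkron} the rank of $P_{\mu\nu\lambda}$ is $d_\mu d_\nu d_\lambda\, g_{\mu\nu\lambda}$, and since $d_\mu,d_\nu,d_\lambda\ge 1$ this rank is nonzero iff $g_{\mu\nu\lambda}>0$. For completeness: if $g_{\mu\nu\lambda}>0$ the range of $P_{\mu\nu\lambda}$ is nontrivial, so the prover sends any unit vector in that range and the verifier accepts with probability $1$. For soundness: if $g_{\mu\nu\lambda}=0$ then $P_{\mu\nu\lambda}=0$, so for \emph{every} witness $|\psi\rangle$ the acceptance probability $\langle\psi|P_{\mu\nu\lambda}|\psi\rangle$ vanishes and the verifier rejects with certainty. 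Thus the protocol has perfect completeness and soundness, which is far inside the completeness/soundness gap required for $\QMA$.

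The substantive content has already been done in Theorem~\ref{thm:main}: all that remains is to check that the circuit there genuinely conforms to the $\QMA$-verifier template used in the Introduction --- i.e., that weak Fourier sampling and generalized phase estimation over $S_n$ act on $\poly(n)$ qubits, compile to a $\poly(n)$-size circuit, and produce a single accept/reject bit whose positive POVM element is exactly $P_{\mu\nu\lambda}$ (the product of the accept elements of the two measurement stages). Given Theorem~\ref{thm:main} there is no further obstacle; the only work is the short bookkeeping above, and one may additionally remark that since the protocol is deterministic the containment in fact holds with perfect completeness and soundness.
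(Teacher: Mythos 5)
Your proposal is correct and follows the same route the paper takes: use the Theorem~\ref{thm:main} circuit as the $\QMA$ verifier, observe via Lemma~\ref{lem:numkron} that $\mathrm{rank}(P_{\mu\nu\lambda})=d_\mu d_\nu d_\lambda\, g_{\mu\nu\lambda}$ is nonzero exactly when $g_{\mu\nu\lambda}>0$, and then get perfect completeness (send a vector in the range of $P_{\mu\nu\lambda}$) and perfect soundness ($P_{\mu\nu\lambda}=0$ forces rejection). The paper states this in the paragraph immediately preceding the corollary, including the remark that the protocol has perfect completeness and soundness, so there is nothing to add.
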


This upper bound complements the recent result of Ikenmeyer, Mulmuley and Walter who established that deciding positivity of Kronecker coefficients is $\NP$-hard \cite{ikenmeyer2017vanishing}.

Corollaries \ref{corol:main}, \ref{corol:qma} describe complexity-theoretic consequences of the fact that a Kronecker coefficient is proportional to the dimension of a subspace for which membership can be efficiently verified on a quantum computer. Similar arguments shed light on another positivity problem on Stanley's list \cite{stanley99positivityproblems}, which concerns the row sums of the character table of the symmetric group:
\begin{equation}
R_{\lambda}\equiv \sum_{\mu \vdash n} \chi_{\lambda}(\mu).
\label{eq:rowsum}
\end{equation}
Here $\chi_{\lambda}(\mu)$ is the value of the irreducible character $\chi_{\lambda}$ evaluated at any permutation with cycle structure described by $\mu$.

Just like the Kronecker coefficients, the row sums $R_{\lambda}$ are nonnegative integers. This follows from the fact that $R_{\lambda}$ is equal to  the multiplicity of an irreducible representation $\rho_\lambda$ in the decomposition of the so-called \emph{ conjugation representation} $\rho_c$ which acts on the Hilbert space $\mathcal{H}=\mathbb{C}S_n$ as $\rho_c(\pi)|\sigma\rangle = |\pi \sigma \pi^{-1}\rangle$. In particular, we have the decomposition $\rho_c\simeq \bigoplus_\lambda R_{\lambda}\rho_{\lambda}$, see the Supplementary Material for details. The generalized phase estimation circuit Fig.~\ref{fig:gpe}(b) can be used to efficiently measure the projector $\Pi^c_{\lambda}$ given by Eq.~\eqref{eq:pilambda} with $\rho\rightarrow\rho_c$. In this way we obtain:

\begin{theorem}
For any $\lambda\vdash n$ there exists a quantum circuit of size $\mathrm{poly}(n)$ that implements a projective measurement $\{\Pi^{c}_{\lambda}, I-\Pi^{c}_{\lambda}\}$, where  $d_{\lambda} R_\lambda= \mathrm{Tr}(\Pi^{c}_{\lambda})$. 
\label{thm:rowsum}
\end{theorem}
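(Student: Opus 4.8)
The plan is to prove the statement in two nearly independent pieces: the circuit construction, which is a direct instantiation of generalized phase estimation exactly as in the proof of Theorem~\ref{thm:main}, and the trace identity $d_{\lambda}R_{\lambda}=\mathrm{Tr}(\Pi^c_{\lambda})$, which is a short character-theoretic calculation. For the circuit, I would run the generalized phase estimation circuit of Fig.~\ref{fig:gpe}(b) on the input register $\mathbb{C}S_n$ with the choice $\rho\to\rho_c$, and accept if and only if the label returned by the final weak-Fourier-sampling readout equals the given $\lambda$; the associated POVM element is then exactly $\Pi^c_{\lambda}$, so the two-outcome measurement realized is $\{\Pi^c_{\lambda},I-\Pi^c_{\lambda}\}$. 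As recalled in the excerpt, this circuit has size $\mathrm{poly}(n)$ provided we can implement weak Fourier sampling $M_L$ (available through Beals' $\mathrm{QFT}_n$) together with the controlled gate $\mathrm{C}\text{-}\rho_c$ and its inverse. The key point is that $(\mathrm{C}\text{-}\rho_c)|\alpha\rangle|\sigma\rangle=|\alpha\rangle|\alpha\sigma\alpha^{-1}\rangle$ is a permutation of the computational basis of $\mathbb{C}S_n\otimes\mathbb{C}S_n$ whose action is computable in classical polynomial time (form the product $\alpha\sigma\alpha^{-1}$ from the images of $\alpha,\sigma$), with inverse $|\alpha\rangle|\sigma\rangle\mapsto|\alpha\rangle|\alpha^{-1}\sigma\alpha\rangle$ equally easy to compute; hence both are realizable by reversible circuits of size $\mathrm{poly}(n)$.

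For the trace identity I would compute directly from Eq.~\eqref{eq:pilambda}: $\mathrm{Tr}(\Pi^c_{\lambda})=\frac{d_{\lambda}}{n!}\sum_{\alpha\in S_n}\chi_{\lambda}(\alpha)\,\mathrm{Tr}(\rho_c(\alpha))$, and observe that $\mathrm{Tr}(\rho_c(\alpha))$ counts the permutations $\sigma$ fixed by conjugation by $\alpha$, i.e.\ equals the order of the centralizer of $\alpha$ in $S_n$. Grouping the $\alpha$'s by conjugacy class and using the orbit–stabilizer relation (a class of cycle type $\mu$ has $n!/c$ elements, each with centralizer of order $c$), the class of cycle type $\mu$ contributes $n!\,\chi_{\lambda}(\mu)$, and summing over $\mu\vdash n$ gives $\mathrm{Tr}(\Pi^c_{\lambda})=d_{\lambda}\sum_{\mu\vdash n}\chi_{\lambda}(\mu)=d_{\lambda}R_{\lambda}$. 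Alternatively one can shortcut this by invoking the decomposition $\rho_c\simeq\bigoplus_{\lambda}R_{\lambda}\rho_{\lambda}$ stated in the excerpt together with the fact (from the Supplementary Material) that $\Pi^c_{\lambda}$ projects onto the $\rho_{\lambda}$-isotypic component of $\rho_c$, which has dimension $d_{\lambda}R_{\lambda}$.

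I do not expect a genuine obstacle here; the proof essentially reduces Theorem~\ref{thm:rowsum} to machinery already set up for Theorem~\ref{thm:main}. The one point that warrants care is the efficiency claim for $\mathrm{C}\text{-}\rho_c$: one must fix a concrete basis encoding of $\mathbb{C}S_n$ in which permutation multiplication (hence conjugation) is a poly-time, reversible operation, and verify that this is the same encoding on which $\mathrm{QFT}_n$ and weak Fourier sampling act, so that the subroutines compose correctly within the generalized phase estimation circuit. Everything else is routine bookkeeping.
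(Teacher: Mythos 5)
Your proposal is correct and follows essentially the same approach as the paper: generalized phase estimation with $\rho\to\rho_c$ for the circuit, and the same character-theoretic calculation (using $\mathrm{Tr}(\rho_c(\alpha))=|Z(\alpha)|$ together with orbit--stabilizer) for the trace identity, which the paper spreads across two lemmas in the Supplementary Material. Your extra remark about implementing $\mathrm{C}\text{-}\rho_c$ as a reversible classical circuit for permutation conjugation is a detail the paper leaves implicit, but it is the right thing to check.
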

As a consequence, the problem of computing $d_{\lambda} R_{\lambda}$ is in $\#\BQP$, the problem of approximating $R_{\lambda}$ to within a given relative error is efficiently reducible to $\QAPC$, and the problem of deciding positivity of $R_{\lambda}$ is contained in $\QMA$.

Our discussion thus far has centred around problems that are thought to be intractable for quantum or classical computers. It is natural to ask whether there are easier versions of these problems that may admit efficient quantum or classical algorithms. For example--- although it is very unlikely that there exists an efficient algorithm to approximate the irreducible characters of the symmetric group to with a given \textit{relative} error \cite{Ikenmeyer22}---  Jordan has shown that the normalized irreducible characters $\chi_{\lambda}(\pi)/d_{\lambda}$ can be efficiently approximated to within a given additive error using a classical randomized algorithm \cite{Jordan08}. Similarly, a quantum algorithm for approximating normalized Kronecker coefficients can be inferred from Ref.~\cite{Moore07}.
\begin{lemma} \cite{Moore07}
There is a quantum circuit of size $poly(n)$ that samples from the probability distribution $p(\lambda)=  \frac{d_\lambda g_{\mu\nu\lambda}}{d_\mu d_\nu}$.
\label{lem:samplekron}
\end{lemma}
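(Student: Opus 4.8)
The plan is to realise $p$ as the outcome distribution of the projective measurement $M_{\rho_\mu\otimes\rho_\nu}=\{\Pi_\lambda:\lambda\vdash n\}$ defined by Eq.~\eqref{eq:pilambda} (with $\rho=\rho_\mu\otimes\rho_\nu$) applied to the maximally mixed state on the carrier space of $\rho_\mu\otimes\rho_\nu$. Since $\Pi_\lambda$ is the orthogonal projector onto the $\lambda$-isotypic component, Eq.~\eqref{eq:tensorrep} gives $\mathrm{Tr}(\Pi_\lambda)=d_\lambda g_{\mu\nu\lambda}$, while the carrier space has dimension $d_\mu d_\nu=\sum_\lambda d_\lambda g_{\mu\nu\lambda}$. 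Hence measuring $M_{\rho_\mu\otimes\rho_\nu}$ on $\tfrac{1}{d_\mu d_\nu}I$ returns $\lambda$ with probability $\tfrac{1}{d_\mu d_\nu}\mathrm{Tr}(\Pi_\lambda)=\tfrac{d_\lambda g_{\mu\nu\lambda}}{d_\mu d_\nu}=p(\lambda)$, and these add to one. It therefore suffices to prepare this maximally mixed state and to implement $M_{\rho_\mu\otimes\rho_\nu}$, each with a circuit of size $\mathrm{poly}(n)$.

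For the state preparation I would work in Young's orthogonal representation, whose carrier space for $\rho_\omega$ has an orthonormal basis $\{|T\rangle\}$ indexed by the $d_\omega$ standard Young tableaux $T$ of shape $\omega$; each such $T$ is a filling of the diagram of $\omega$ and is encoded by $\mathrm{poly}(n)$ bits. The maximally mixed state $\tfrac{1}{d_\omega}\sum_T|T\rangle\langle T|$ is obtained by classically sampling a uniformly random standard Young tableau of shape $\omega$ in time $\mathrm{poly}(n)$---for instance by the Greene--Nijenhuis--Wilf hook-walk algorithm---and writing it into a quantum register. Doing this for $\omega=\mu$ on one register and $\omega=\nu$ on a second register prepares the required maximally mixed state on the carrier space of $\rho_\mu\otimes\rho_\nu$.

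For the measurement I would invoke the generalized phase estimation circuit of Fig.~\ref{fig:gpe} with $\rho=\rho_\mu\otimes\rho_\nu$. This needs Beals' Fourier transform $\mathrm{QFT}_n$ (for the weak Fourier sampling subroutine on a $\mathbb{C}S_n$ ancilla register) together with the controlled gate $(\mathrm{C-}\rho)|\alpha\rangle|T_1\rangle|T_2\rangle=|\alpha\rangle\,(\rho_\mu(\alpha)|T_1\rangle)\otimes(\rho_\nu(\alpha)|T_2\rangle)$ and its inverse. Writing an arbitrary $\alpha\in S_n$ as a product of $O(n^2)$ adjacent transpositions $s_i$, we have $\rho_\omega(\alpha)=\prod_i\rho_\omega(s_i)$, and each $\rho_\omega(s_i)$ is an explicit unitary supported on the (at most two-dimensional) span of $|T\rangle$ and $|s_iT\rangle$, with entries fixed by an axial distance read off from $T$; hence $\mathrm{C-}\rho$ is a $\mathrm{poly}(n)$-size circuit. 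Running this measurement and returning its outcome $\lambda$ produces the distribution $p$.

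The only nonroutine ingredients are the two classical facts about Young's orthogonal form used above: that a uniformly random standard Young tableau of a prescribed shape can be sampled efficiently, and that the adjacent-transposition matrices are sparse with explicitly computable entries. I expect the cleanest write-up simply to quote these. It is worth stressing that preparing the maximally mixed state directly on the irrep carrier spaces---rather than, say, Fourier-sampling a maximally mixed state on $\mathbb{C}S_n$ and conditioning on the outcome being $\mu$---is what keeps the procedure efficient, since the latter would succeed only with exponentially small probability $d_\mu^2/n!$.
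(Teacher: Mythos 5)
Your proof is correct, but it takes a genuinely different route from the paper's. The paper works throughout in the group algebra: it prepares the state $\frac{1}{d_\mu^2 d_\nu^2}\Pi^L_\mu\otimes\Pi^L_\nu$ on $\mathbb{C}S_n\otimes\mathbb{C}S_n$ (via $\mathrm{QFT}_n^\dagger(\ket{\mu}\bra{\mu}\otimes I/d_\mu\otimes I/d_\mu)\mathrm{QFT}_n$ on each factor) and then measures the isotypic projector of the diagonal representation $\rho(g)=R(g)\otimes R(g)$, where $R$ is the left-regular representation; a trace computation shows the outcome distribution is $p(\lambda)$. You instead work directly on the carrier space of $\rho_\mu\otimes\rho_\nu$: you prepare the maximally mixed state there by classically sampling standard Young tableaux via hook walk, and you implement $\mathrm{C}\text{-}(\rho_\mu\otimes\rho_\nu)$ explicitly via Young's orthogonal form, then measure isotypic components by generalized phase estimation. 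These two states are really the same thing up to tensoring with an inert maximally mixed multiplicity register --- the $\mu$-isotypic block of $R$ is $\rho_\mu$ with multiplicity $d_\mu$, and the extra multiplicity factor is invariant under the group action and so does not affect the measurement outcomes --- so both procedures yield exactly $p(\lambda)$.

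The trade-off between the two is worth spelling out. The paper's choice of the regular representation makes the controlled gate trivial: $\mathrm{C}\text{-}R$ is just a controlled permutation of computational basis labels, implementable by a classical reversible circuit, and all the representation-theoretic content is confined to Beals' $\mathrm{QFT}_n$, which is already available as a black box. Your version is more economical in Hilbert-space dimension ($d_\mu d_\nu$ rather than $(n!)^2$), but it shifts the representation-theoretic burden into the controlled gate: you need to write any $\alpha$ as a fixed-length word in Coxeter generators and apply the corresponding sparse $\rho_\omega(s_i)$ blocks. That works, and it's a standard construction, but it is the part of your argument that deserves the most care --- the factorization into adjacent transpositions must be computed reversibly (and coherently, controlled on the $\ket{\alpha}$ register) so that the sequence of controlled two-level rotations is well-defined on superpositions. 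Your observation at the end --- that conditioning weak Fourier sampling of the maximally mixed state on $\mathbb{C}S_n$ to outcome $\mu$ would succeed with probability only $d_\mu^2/n!$ --- is exactly right, and it is why both the paper and you must prepare the restricted maximally mixed state directly rather than post-select on it.
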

We describe the algorithm in the Supplementary Material. (To see that  $p(\lambda)$ is a normalized  probability distribution, apply Eq.~(\ref{eq:tensorrep}) to the identity permutation and take the trace.)
Given a partition $\lambda$, one can  approximate
the probability $p(\lambda)$ 
 with an additive error
$\epsilon$ by generating $O(1/\epsilon^2)$ samples from $p$ and computing
the fraction of samples equal to $\lambda$. 
This gives an approximation of $g_{\mu \nu \lambda}$ with an additive error
$\delta = \epsilon d_\mu d_\nu /d_\lambda$. 
This approximation can be efficiently obtained
on a quantum computer, as long as $\epsilon \ge \mathrm{poly}(1/n)$.
Note that this approximation
is good enough to solve
ExactKron  if $\min{(d_\mu,d_\nu,d_\lambda)}\le \mathrm{poly}(n)$.
Indeed, assuming wlog that $d_\mu\le d_\nu\le d_\lambda$, one gets
$\delta \le \epsilon d_\mu \le \epsilon \cdot \mathrm{poly}(n)<1/2$ for $\epsilon=  \mathrm{poly}(1/n)$. We do not know if such an approximation can be efficiently obtained classically.
Polynomial-time
classical algorithms for ExactKron are only known in the special case when all partitions $\mu,\nu,\lambda$ have constant (or nearly constant) number of parts~\cite{pak14}.

In summary, we have upper bounded the computational hardness of counting problems related to the representation theory of the symmetric group. We have shown that two mysterious quantities in this arena---the Kronecker coefficients and the row sums of the symmetric group--- are proportional to  $\#\BQP$ functions and can be approximated via quantum approximate counting. We have also described an efficient quantum algorithm for estimating the normalized Kronecker coefficient $g_{\mu\nu\lambda} d_{\lambda}/(d_{\mu}d_{\nu})$ to within a given additive error. 
We note that Kronecker coefficient squared $g_{\mu\nu\lambda}^2$ can be expressed as an eigenvalue degeneracy for a certain Hamiltonian describing dynamics of bipartite ribbon graphs~\cite{geloun2020quantum}.
This provides an alternative quantum mechanical interpretation of the Kronecker coefficients.
It remains a challenging open question to understand the computational hardness of quantum approximate counting relative to other known complexity classes.

\paragraph{Acknowledgments}
We thank Matthias Christandl, Robin Kothari, Hari Krovi, Greg Kuperberg, Igor Pak, and Michael Walter for helpful discussions. SB, AC, DG, and VH are supported in part by the Army Research Office under Grant Number W911NF-20-1-0014. DG is a CIFAR fellow in the quantum information science program, and is also supported in part by IBM Research.  GZ is supported by the U.S. Department of Energy, Office of Science, National Quantum Information Science Research Centers, Co-design Center for Quantum Advantage (C2QA) under contract number DE-SC0012704. This research was supported in part by Perimeter Institute for Theoretical Physics. Research at Perimeter Institute is supported by the Government of Canada through the Department of Innovation, Science and Economic Development and by the Province of Ontario through Ministry of Colleges and Universities.

\paragraph{Note added:} 
At the final stages of completing this work we became aware that other groups have independently obtained similar results \footnote{Greg Kuperberg, Igor Pak, and Greta Panova, personal communication.}\footnote{Matthias Christandl, Aram Harrow, and Michael Walter. Presented by M. Walter at the Algorithms and Complexity in Algebraic Geometry Reunion-Workshop, Simons Institute, Berkeley, December 2015}.

\bibliographystyle{plain}
\bibliography{refs}

\begin{thebibliography}{10}

\bibitem{aaronson2011computational}
Scott Aaronson and Alex Arkhipov.
\newblock The computational complexity of linear optics.
\newblock In {\em Proceedings of the forty-third annual ACM symposium on Theory
  of computing}, pages 333--342, 2011.

\bibitem{aharonov2022pursuit}
Dorit Aharonov, Michael Ben-Or, Fernando~GSL Brandao, and Or~Sattath.
\newblock The pursuit of uniqueness: Extending {V}aliant-{V}azirani theorem to
  the probabilistic and quantum settings.
\newblock {\em Quantum}, 6:668, 2022.

\bibitem{alperin2012groups}
J.L. Alperin and R.B. Bell.
\newblock {\em Groups and Representations}.
\newblock Graduate Texts in Mathematics. Springer New York, 2012.

\bibitem{beals1997quantum}
Robert Beals.
\newblock Quantum computation of {F}ourier transforms over symmetric groups.
\newblock In {\em Proceedings of the twenty-ninth annual ACM symposium on
  Theory of computing}, pages 48--53, 1997.

\bibitem{bravyi2022quantum}
Sergey Bravyi, Anirban Chowdhury, David Gosset, and Pawel Wocjan.
\newblock Quantum {H}amiltonian complexity in thermal equilibrium.
\newblock {\em Nature Physics}, 18(11):1367--1370, 2022.

\bibitem{brown2011computational}
Brielin Brown, Steven~T Flammia, and Norbert Schuch.
\newblock Computational difficulty of computing the density of states.
\newblock {\em Physical review letters}, 107(4):040501, 2011.

\bibitem{burgisser2008complexity}
Peter B{\"u}rgisser and Christian Ikenmeyer.
\newblock The complexity of computing {K}ronecker coefficients.
\newblock In {\em Discrete Mathematics and Theoretical Computer Science}, pages
  357--368. Discrete Mathematics and Theoretical Computer Science, 2008.

\bibitem{cade2021complexity}
Chris Cade and P~Marcos Crichigno.
\newblock Complexity of supersymmetric systems and the cohomology problem.
\newblock {\em arXiv preprint arXiv:2107.00011}, 2021.

\bibitem{christandl2007nonzero}
Matthias Christandl, Aram~W Harrow, and Graeme Mitchison.
\newblock Nonzero {K}ronecker coefficients and what they tell us about spectra.
\newblock {\em Communications in mathematical physics}, 270(3):575--585, 2007.

\bibitem{christandl05}
Matthias Christandl and Graeme Mitchison.
\newblock The spectra of quantum states and the {K}ronecker coefficients of the
  symmetric group.
\newblock {\em Communications in Mathematical Physics}, 261(3):789--797, oct
  2005.

\bibitem{gapp}
Stephen~A. Fenner, Lance~J. Fortnow, and Stuart~A. Kurtz.
\newblock Gap-definable counting classes.
\newblock {\em Journal of Computer and System Sciences}, 48(1):116--148, 1994.

\bibitem{geloun2020quantum}
Joseph~Ben Geloun and Sanjaye Ramgoolam.
\newblock Quantum mechanics of bipartite ribbon graphs: Integrality, lattices
  and {K}ronecker coefficients.
\newblock {\em arXiv preprint arXiv:2010.04054}, 2020.

\bibitem{goldberg2017complexity}
Leslie~Ann Goldberg and Heng Guo.
\newblock The complexity of approximating complex-valued {I}sing and {T}utte
  partition functions.
\newblock {\em computational complexity}, 26(4):765--833, 2017.

\bibitem{harrow05}
Aram~W. {Harrow}.
\newblock {Applications of coherent classical communication and the Schur
  transform to quantum information theory}.
\newblock {\em arXiv e-prints}, pages quant--ph/0512255, December 2005.

\bibitem{ikenmeyer2017vanishing}
Christian Ikenmeyer, Ketan~D Mulmuley, and Michael Walter.
\newblock On vanishing of {K}ronecker coefficients.
\newblock {\em computational complexity}, 26(4):949--992, 2017.

\bibitem{Ikenmeyer22b}
Christian Ikenmeyer and Igor Pak.
\newblock What is in\# {P} and what is not?
\newblock In {\em 2022 IEEE 63rd Annual Symposium on Foundations of Computer
  Science (FOCS)}, pages 860--871. IEEE, 2022.

\bibitem{Ikenmeyer22}
Christian Ikenmeyer, Igor Pak, and Greta Panova.
\newblock Positivity of the symmetric group characters is as hard as the
  polynomial time hierarchy.
\newblock In {\em Proceedings of the 2023 Annual ACM-SIAM Symposium on Discrete
  Algorithms (SODA)}, pages 3573--3586. SIAM, 2023.

\bibitem{jain2012power}
Rahul Jain, Iordanis Kerenidis, Greg Kuperberg, Miklos Santha, Or~Sattath, and
  Shengyu Zhang.
\newblock On the power of a unique quantum witness.
\newblock {\em Theory OF Computing}, 8:375--400, 2012.

\bibitem{james1984}
James.
\newblock {\em The Representation Theory of the Symmetric Group}.
\newblock Encyclopedia of Mathematics and its Applications. Cambridge
  University Press, 1984.

\bibitem{Jordan08}
Stephen~P. Jordan.
\newblock Fast quantum algorithms for approximating some irreducible
  representations of groups, 2008.

\bibitem{kitaev2002classical}
Alexei~Yu Kitaev, Alexander Shen, Mikhail~N Vyalyi, and Mikhail~N Vyalyi.
\newblock {\em Classical and quantum computation}.
\newblock Number~47. American Mathematical Soc., 2002.

\bibitem{klyachko04}
Alexander Klyachko.
\newblock Quantum marginal problem and representations of the symmetric group,
  2004.

\bibitem{marriott2005quantum}
Chris Marriott and John Watrous.
\newblock Quantum {A}rthur--{M}erlin games.
\newblock {\em computational complexity}, 2(14):122--152, 2005.

\bibitem{Moore03}
Cristopher Moore, Daniel Rockmore, and Alexander Russell.
\newblock Generic quantum {F}ourier transforms, 2003.

\bibitem{Moore07}
Cristopher Moore, Alexander Russell, and Piotr Sniady.
\newblock On the impossibility of a quantum sieve algorithm for graph
  isomorphism.
\newblock In {\em Proceedings of the thirty-ninth annual ACM symposium on
  Theory of computing}, pages 536--545, 2007.

\bibitem{Note1}
We shall assume the reader is familiar with basic definitions concerning the
  representation theory of the symmetric group.

\bibitem{Note2}
A stretched Kronecker coefficient is a family of Kronecker coefficients, where
  the number of boxes in every partition on every input can be multiplied by an
  arbitrary positive integer.

\bibitem{Note3}
Some works in this area consider problems where the parts of each partition are
  specified in binary (i.e., Ref.~\cite {burgisser2008complexity}).

\bibitem{Note4}
See Remark 14 in Ref.~\cite {pak2020breaking}.

\bibitem{Note5}
In this paper we have changed the name to ${\protect \ensuremath {\protect
  \mathsf {QXC}}}$ from $\protect \mathrm {QAC}$ as in Ref.~\cite
  {bravyi2022quantum} to avoid confusion with other quantum complexity classes.

\bibitem{Note6}
Note that approximating the number of accepting witnesses of a ${\protect
  \ensuremath {\protect \mathsf {QMA}}}$ verifier with a small relative error
  allows one to check whether the verifier accepts at least one witness.

\bibitem{Note7}
In particular, the approximation version of the problem described in Theorem 5
  of Reference \cite {cade2021complexity}.

\bibitem{Note8}
Greg Kuperberg, Igor Pak, and Greta Panova, personal communication.

\bibitem{Note9}
Matthias Christandl, Aram Harrow, and Michael Walter. Presented by M. Walter at
  the Algorithms and Complexity in Algebraic Geometry Reunion-Workshop, Simons
  Institute, Berkeley, December 2015.

\bibitem{paksurvey}
Igor Pak.
\newblock What is a combinatorial interpretation?
\newblock {\em arXiv preprint arXiv:2209.06142}, 2022.

\bibitem{pak14}
Igor Pak and Greta Panova.
\newblock On the complexity of computing {K}ronecker coefficients.
\newblock {\em computational complexity}, 26:1--36, 2017.

\bibitem{pak2020breaking}
Igor Pak and Greta Panova.
\newblock Breaking down the reduced {K}ronecker coefficients.
\newblock {\em Comptes Rendus. Math{\'e}matique}, 358(4):463--468, 2020.

\bibitem{serre1977linear}
J.P. Serre.
\newblock {\em Linear Representations of Finite Groups}.
\newblock Collection M{\'e}thodes. Math{\'e}matiques. Springer-Verlag, 1977.

\bibitem{shi2009note}
Yaoyun Shi and Shengyu Zhang.
\newblock Note on quantum counting classes.
\newblock {\em Available from Shengyu Zhang’s homepage: http://www. cse.
  cuhk. edu. hk/syzhang/papers/SharpBQP. pdf}, 2009.

\bibitem{stanley99positivityproblems}
Richard~P. Stanley.
\newblock Positivity problems and conjectures in algebraic combinatorics.
\newblock In {\em in Mathematics: Frontiers and Perspectives}, pages 295--319.
  American Mathematical Society, 1999.

\bibitem{stockmeyer1983complexity}
Larry Stockmeyer.
\newblock The complexity of approximate counting.
\newblock In {\em Proceedings of the fifteenth annual ACM symposium on Theory
  of computing}, pages 118--126, 1983.

\end{thebibliography}

\onecolumngrid
\appendix

\section{Properties of the operators $\{\Pi_{\mu}\}$} 
Here we show that the operators in Eq.~\eqref{eq:pilambda} are mutually orthogonal projectors that add up to identity. To this end we use the \emph{grand orthogonality theorem} (a consequence of Schur's lemma) of representation theory.
\begin{theorem}[Grand orthogonality theorem] (See \cite[Sec. 2.2.]{serre1977linear})
    Given irreducible representations $\rho_\mu$ and $\rho_\nu$, we have that: 
    \begin{align}
        \sum_{g \in S_n} \rho_\mu(g)^\dag_{ij} \rho_\nu(g)_{k\ell} &= \frac{n!}{d_\mu} \delta_{ik} \delta_{j \ell} \delta_{\mu \nu}. 
    \end{align} 
\label{thm:got}
\end{theorem}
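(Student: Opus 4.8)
The plan is to derive Theorem~\ref{thm:got} from Schur's lemma in the standard way, by averaging an arbitrary linear map over $S_n$. Write $V_\mu$ and $V_\nu$ for the carrier spaces of $\rho_\mu$ and $\rho_\nu$, and recall that Young's orthogonal representations are unitary, so $\rho_\mu(g)^{-1} = \rho_\mu(g)^\dagger$ for every $g\in S_n$, with $\big(\rho_\mu(g)^{-1}\big)_{ij} = \overline{\rho_\mu(g)_{ji}}$. For a linear map $X\colon V_\nu \to V_\mu$ I would form the averaged operator $\Phi(X) = \frac{1}{n!}\sum_{g\in S_n} \rho_\mu(g)\,X\,\rho_\nu(g)^{-1}$, which maps $V_\nu$ to $V_\mu$.

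The first step is to verify that $\Phi(X)$ is an intertwiner, i.e.\ $\rho_\mu(h)\,\Phi(X) = \Phi(X)\,\rho_\nu(h)$ for all $h\in S_n$; this is immediate from the change of summation variable $g\to h^{-1}g$ together with the fact that $\rho_\mu$ and $\rho_\nu$ are homomorphisms. The second step is Schur's lemma over $\CC$: since $\rho_\mu$ and $\rho_\nu$ are irreducible, $\Phi(X)=0$ whenever $\mu\neq\nu$ (as then $\rho_\mu\not\simeq\rho_\nu$), while $\Phi(X)=c(X)\,I_{V_\mu}$ for a scalar $c(X)\in\CC$ when $\mu=\nu$. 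The third step pins down this scalar: taking the trace gives $\tr\Phi(X) = \frac{1}{n!}\sum_g \tr\!\big(\rho_\mu(g)X\rho_\mu(g)^{-1}\big) = \tr(X)$, whereas $\tr\!\big(c(X)I_{V_\mu}\big) = c(X)\,d_\mu$, so $c(X) = \tr(X)/d_\mu$.

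The final step reads off the entrywise statement by specializing $X$ to the elementary matrices $E_{jk}$, for which $\tr(E_{jk}) = \delta_{jk}$ and $\Phi(E_{jk})_{i\ell} = \frac{1}{n!}\sum_g \rho_\mu(g)_{ij}\,\big(\rho_\nu(g)^{-1}\big)_{k\ell}$. Combining with the previous steps yields $\sum_g \rho_\mu(g)_{ij}\,\big(\rho_\nu(g)^{-1}\big)_{k\ell} = \frac{n!}{d_\mu}\,\delta_{\mu\nu}\,\delta_{jk}\,\delta_{i\ell}$, and rewriting $\big(\rho_\nu(g)^{-1}\big)_{k\ell} = \overline{\rho_\nu(g)_{\ell k}}$ via unitarity, then taking complex conjugates and relabelling indices, gives the claimed form in Theorem~\ref{thm:got}. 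Since Schur's lemma is granted, there is no genuine obstacle; the only thing requiring care is tracking the index positions as one moves between $\rho^{-1}$, $\rho^\dagger$, and the entrywise complex conjugate, so that the two Kronecker deltas land in the slots asserted by the statement.
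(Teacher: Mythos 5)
The paper does not prove Theorem~\ref{thm:got} at all; it simply cites Serre, so there is no ``paper proof'' to compare against and your argument stands alone. Your route---average an arbitrary $X\colon V_\nu\to V_\mu$ over the group, invoke Schur's lemma to conclude the average is zero (if $\mu\neq\nu$) or a scalar $\tr(X)/d_\mu$ (if $\mu=\nu$), and then read off entries by taking $X=E_{jk}$---is exactly the textbook derivation (it is, in fact, the one in Serre, Sec.~2.2). Steps~1--3 are complete and correct.

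The one place that deserves more than a hand-wave is precisely the one you flag: ``then taking complex conjugates and relabelling indices.'' Your Step~4 yields
\begin{align}
\sum_{g\in S_n}\rho_\mu(g)_{ij}\,\bigl(\rho_\nu(g)^{-1}\bigr)_{k\ell} \;=\; \frac{n!}{d_\mu}\,\delta_{\mu\nu}\,\delta_{jk}\,\delta_{i\ell},
\end{align}
and from here the landing spot depends on what the paper's $\rho_\mu(g)^\dagger_{ij}$ means. If it denotes $\overline{\rho_\mu(g)_{ij}}$, i.e.\ the $(i,j)$ entry of the entrywise conjugate (which is what Serre's statement uses), then after conjugating and swapping $k\leftrightarrow\ell$ you obtain exactly $\frac{n!}{d_\mu}\delta_{\mu\nu}\delta_{ik}\delta_{j\ell}$ as written. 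If instead it denotes $(\rho_\mu(g)^\dagger)_{ij}=\overline{\rho_\mu(g)_{ji}}$---which is what the paper itself uses a few lines later, via $\rho_\mu(g^{-1})_{jk}=\rho_\mu(g)^\dagger_{jk}$---then the same computation gives $\frac{n!}{d_\mu}\delta_{\mu\nu}\delta_{i\ell}\delta_{jk}$, with the deltas in the opposite slots. You should spell out the relabelling and note which reading you are adopting; as written, ``gives the claimed form'' is asserted rather than shown, and for the conjugate-transpose reading it does not. (A quick sanity check: setting $j=k$ and summing gives $\sum_g(\rho_\mu(g)^\dagger\rho_\mu(g))_{i\ell}=n!\,\delta_{i\ell}$ when $\mu=\nu$, which matches $\delta_{i\ell}\delta_{jk}$ but not $\delta_{ik}\delta_{j\ell}$; the discrepancy is invisible in the paper's downstream application only because there the relevant indices coincide.) None of this affects the soundness of your averaging argument---only the final bookkeeping---but since you yourself identified index-tracking as the sole delicate point, you should carry it through.
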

Using Theorem \ref{thm:got} we obtain the following identity for any fixed $h \in S_n$.
\begin{align*}
    \sum_{g \in S_n} \chi_\nu(g) \chi_\mu(g^{-1} h) &= \sum_{g \in S_n} \sum_{i,j} \rho_\nu(g)_{ii} \rho_\mu(g^{-1}h)_{jj} \\ &= \sum_{g \in S_n} \sum_{ijk}\rho_\nu(g)_{ii} \rho_\mu(g^{-1})_{jk} \rho_\mu(h)_{kj} \\ &=  \sum_{ijk}  \left( \sum_{g \in S_n} \rho_\nu(g)_{ii} \rho_\mu(g)^\dag_{jk} \right) \rho_\mu(h)_{kj} \\ 
    &= \delta_{\mu \nu} \sum_{ijk} \delta_{ij} \delta_{ik} \frac{n!}{d_\nu} \rho_\mu(h)_{kj} = \delta_{\mu \nu} \sum_i \frac{n!}{d_\nu} \rho_\mu(h)_{ii} = \delta_{\mu \nu} \frac{n!}{d_\nu} \chi_\mu(h).
    \end{align*}
It follows that $\{\Pi_{\mu}\}$ are mutually orthogonal projectors. 
\begin{align}
    \Pi_\nu \Pi_\mu &= \frac{d_\mu d_\nu}{(n!)^2} \sum_{gh} \chi_\nu(g) \chi_\mu(h) \rho(gh) = \frac{d_\mu d_\nu}{(n!)^2} \sum_{g \ell} \chi_\nu(g) \chi_\mu(g^{-1} \ell) \rho(\ell) = \delta_{\mu \nu} \frac{d_\mu}{n!} \sum_\ell \chi_\mu(\ell) \rho(\ell) = \delta_{\mu \nu} \Pi_\mu.
\end{align}
Using the fact that $d_\mu = \chi_\mu(e)$, where $e \in S_n$ is the identity, we obtain
\begin{align}
    \sum_{\mu \vdash n} \Pi_\mu &= \sum_{g \in S_n} \frac{1}{n!} \rho(g) \sum_{\mu \vdash n} d_\mu \chi_\mu(g) = \sum_{g \in S_n} \frac{1}{n!} \rho(g) \sum_{\mu \vdash n} \chi_\mu(e) \chi_\mu(g).
\label{eq:sumpi}
\end{align}

The column-orthogonality of characters (stated below) implies that
\begin{align}
    \sum_\mu \chi_\mu(e) \chi_\mu(g) &= n! \delta_{eg},
\end{align}
and substituting this into Eq.~\eqref{eq:sumpi} gives $\sum_\mu \Pi_\mu = I $.

\begin{lemma}[Column-orthogonality of characters] (See \cite[Thm 16.4]{alperin2012groups})
For any $h,g\in S_n$ we have
    \begin{align}
    \sum_{\mu \vdash n} \chi_\mu(h) \chi_\mu(g) &= \begin{cases}
    |Z(g)|  &\text{ if $h$ and $g$ are conjugate.}\\ 
    0 & \text{ otherwise. }
    \end{cases}
\end{align}
where $Z(g)$ is the centralizer of $g$ in $S_n$ (the set of all group elements that commute with $g$).
\end{lemma}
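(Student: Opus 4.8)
The plan is to deduce column-orthogonality from row-orthogonality (the first orthogonality relation) together with the fact that the character table of $S_n$ is a \emph{square} matrix. First I would record the row-orthogonality relation $\frac{1}{n!}\sum_{\sigma\in S_n}\chi_\mu(\sigma)\chi_\nu(\sigma)=\delta_{\mu\nu}$, which is valid because the characters of $S_n$ are real (indeed integer-valued, as noted after Eq.~\eqref{eq:pilambda}). This is precisely the $h=e$ specialization of the identity $\sum_{g\in S_n}\chi_\nu(g)\chi_\mu(g^{-1}h)=\delta_{\mu\nu}\frac{n!}{d_\nu}\chi_\mu(h)$ established above from the grand orthogonality theorem (Theorem~\ref{thm:got}), using $\chi_\mu(e)=d_\mu$ and $\chi_\mu(g^{-1})=\chi_\mu(g)$.

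Next I would reorganize this sum by conjugacy class. Let $K_1,\dots,K_r$ denote the conjugacy classes of $S_n$, which are indexed by the partitions of $n$ (two permutations are conjugate iff they have the same cycle type), and fix a representative $\sigma_a\in K_a$. Since each $\chi_\mu$ is a class function, row-orthogonality becomes $\sum_{a=1}^r \frac{|K_a|}{n!}\,\chi_\mu(\sigma_a)\chi_\nu(\sigma_a)=\delta_{\mu\nu}$. Now introduce the $r\times r$ matrix $M$ with entries $M_{\mu a}=\sqrt{|K_a|/n!}\,\chi_\mu(\sigma_a)$; the crucial point is that $M$ is \emph{square}, because the number of inequivalent irreducible representations of a finite group equals the number of its conjugacy classes (see e.g.\ \cite{serre1977linear}). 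Row-orthogonality says $MM^T=I$, and for a square real matrix this forces $M^TM=I$ as well. Reading off the $(a,b)$ entry of $M^TM=I$ gives $\frac{\sqrt{|K_a||K_b|}}{n!}\sum_{\mu\vdash n}\chi_\mu(\sigma_a)\chi_\mu(\sigma_b)=\delta_{ab}$.

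Finally I would translate this back to the statement. If $h$ and $g$ are not conjugate they lie in distinct classes $K_a\ne K_b$, so the off-diagonal relation gives $\sum_\mu\chi_\mu(h)\chi_\mu(g)=0$ (again using that characters are class functions, so $\chi_\mu(h)=\chi_\mu(\sigma_a)$ and $\chi_\mu(g)=\chi_\mu(\sigma_b)$). If $h$ and $g$ are conjugate, say both in $K_a$, the diagonal relation gives $\frac{|K_a|}{n!}\sum_\mu\chi_\mu(\sigma_a)^2=1$, i.e.\ $\sum_\mu\chi_\mu(h)\chi_\mu(g)=n!/|K_a|$; by the orbit--stabilizer theorem $|K_a|\cdot|Z(\sigma_a)|=|S_n|=n!$, so this equals $|Z(\sigma_a)|=|Z(g)|$, as claimed.

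The one genuinely non-elementary ingredient---and the step I expect to be the main obstacle to a fully self-contained argument---is the equality between the number of irreducible representations and the number of conjugacy classes, which is exactly what makes $M$ square and hence lets us pass from $MM^T=I$ to $M^TM=I$; without it, row-orthogonality alone does not imply column-orthogonality. I would simply cite this standard fact. A minor point worth flagging is that the whole argument uses the reality of $S_n$-characters; for a general finite group one would run the same reasoning with $MM^\dagger=I$ and obtain a version of the column relation involving $\overline{\chi_\mu(g)}$.
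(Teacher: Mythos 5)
Your proof is correct. Note that the paper does not actually prove this lemma---it is stated as a known fact with a citation to Alperin--Bell, Thm 16.4---so there is no in-paper argument to compare against; your derivation (row-orthogonality from Theorem~\ref{thm:got}, squareness of the character table, then $MM^T=I\Rightarrow M^TM=I$, and orbit--stabilizer to identify $n!/|K_a|$ with $|Z(g)|$) is exactly the standard textbook route, and you correctly flag the two points that require external input: the reality of $S_n$-characters and the equality between the number of irreducible representations and the number of conjugacy classes.
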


\section{Beals' quantum Fourier transform over the symmetric group}
For completeness, here we review Beals' efficient algorithm for the quantum Fourier transform over the symmetric group \cite{beals1997quantum} (see also \cite{Moore03} for generalizations to other nonabelian groups).

  Let $n\geq 2$ be an integer and consider the symmetric group $S_n$ of permutations on $n$ objects. Also let $T=\{\tau_1,\tau_2,\ldots, \tau_n\}$ be a transversal of the left cosets of $S_{n-1}$ in $S_n$. Here $S_{n-1}$ is regarded as the subgroup of $S_n$ that does not permute the $n$-th object and transversality means that $S_n$ is a union of cosets $\tau_1S_{n-1},\ldots,\tau_n S_{n-1}$.

It will be convenient to fix some notation and conventions concerning the function whose Fourier transform we are interested in. We shall fix a unitary matrix representation of each irreducible representation of $S_n$. In particular, we choose the so-called Young Orthogonal representation (more on this below) and write $\rho_{\lambda}(\pi)$ for the matrix corresponding to the irreducible representation labeled by $\lambda \vdash n$ evaluated at $\pi\in S_n$, and $d_{\lambda}$ for its dimension. 

For a function $f:S_n \rightarrow \mathbb{C}$ we have
 \begin{equation}
 \mathrm{QFT}_n\sum_{\sigma\in S_n} f(\sigma)|\sigma\rangle=\sum_{\omega \vdash n} \sum_{i,j \in d_\omega} (\hat{f}(\omega))_{ij}  |\omega, i,j\rangle, 
\label{eq:ft11}
\end{equation}
where
\begin{equation}
 \hat{f}(\omega)=\sqrt{\frac{d_{\omega}}{n!}} \sum_{\sigma\in S_n} f(\sigma)\rho_\omega(\sigma).
\label{eq:ft22}
 \end{equation}
For ease of notation in this section we include the normalization factor $\sqrt{\frac{d_{\omega}}{n!}}$ in Eq.~(\ref{eq:ft22}) rather than in Eq.~\eqref{eq:ft11} which differs from our convention in the main text (Cf. Eqs.~(\ref{eq:ft1},\ref{eq:ft2})). However the quantum fourier transform $\mathrm{QFT}_n$ is defined in exactly the same way.

Note that the basis vectors in the Fourier basis can be labeled by triples from the set $\Omega=\{(\omega, i,j): \omega\vdash n, 1\leq i,j\leq d_{\omega}\}$, since $|\Omega|=\sum_{\omega\vdash n} d_{\omega}^2 =|S_n|$.
Since the left cosets of $S_{n-1}$ partition $S_n$, we may write
\begin{equation}
f(g)=\sum_{j=1}^{n} F^{j}(g) \qquad \qquad F^{j}(g)\equiv \begin{cases} f(g) &, \text{if }  g\in \tau_j S_{n-1}\\ 0 &, \text{otherwise} \end{cases}.
\label{eq:defF}
\end{equation}
Also define $f_j:S_{n-1}\rightarrow \mathbb{C}$ for $1\leq j\leq n$ by 
\begin{equation}
f_j(h)\equiv f(\tau_j h) \qquad \quad h\in S_{n-1}.
\label{eq:deff}
\end{equation}

Young's orthogonal representation has a certain \textit{adapted} property that allows us to express Eq.~\eqref{eq:ft22} in terms of Fourier transforms of the functions $f_j$ for $1\leq j\leq n$ in a simple way. In particular, 
\begin{equation}
\hat{f}(\lambda)=\sum_{k=1}^{n} \rho_{\lambda}(\tau_k) \bigoplus_{\lambda^{-}\in \Phi(\lambda) } \sqrt{\frac{d_{\lambda}}{n\cdot d_{\lambda^{-}}}}\cdot \hat{f_k} (\lambda^{-}),
\label{eq:dirsum}
\end{equation}
where 
\[
\Phi(\lambda)=\{\lambda^{-}\vdash (n-1) : \lambda^{-}\leq \lambda\}
\]
is the set of partitions of $n-1$ whose diagram differs from $\lambda$ in exactly one box. Note that the Fourier transforms $\hat{f_k}$ appearing on the right hand side of Eq.~\eqref{eq:dirsum} are over the group $S_{n-1}$.

To describe Beals' QFT algorithm it will be convenient to first define some unitary transformations that are used in the construction. In the following we shall assume that we have enough qubits to encode certain classical strings as computational basis states.

 Let $W$ denote the unitary which implements a classical reversible circuit that, given a  permutation $\pi\in S_n$, computes its factorization:
\[
W|\pi\rangle=|\tau_j\rangle|h\rangle,
\]
where $j\in [n]$ and $h\in S_{n-1}$
satisfy $\pi = \tau_j h$. Note that this factorization is unique.

For $\pi \in S_n$ let $M(\pi)$ be a unitary such that
\[
M(\pi) \sum_{\lambda\vdash n}\sum_{p,q=1}^{d_{\lambda}} (A_{\lambda})_{p,q}|\lambda,p,q\rangle= \sum_{\lambda\vdash n} \sum_{p,q=1}^{d_{\lambda} }(\rho_{\lambda}(\pi)A_{\lambda})_{p,q}|\lambda,p,q\rangle
\]
Here $(A_{\lambda})_{p,q}$ are complex coefficients. Note that this can be implemented via a controlled-$\rho_{\lambda}(\pi)$ operation on the $p$ register (where the control is $\lambda$).
 The unitary $M(\pi)$ acts trivially outside the subspace spanned by basis vectors $\{|\lambda,p,q\rangle: \lambda \vdash n, p,q\in [d_{\lambda}]\}$

For each irreducible representation $\sigma$ of $S_{n-1}$ there is a set of irreps $\lambda$ of $S_n$ such that the restriction of $\rho_{\lambda}$ to $S_{n-1}$ contains $\sigma$. These correspond to the partitions $\lambda$ that differ from $\sigma$ by adding one box. Moreover, each matrix element $p,q$ of $\rho_{\sigma}$  (where $p,q\leq d_{\sigma}$) is identified with a unique matrix element $p',q'$ of $\rho_{\lambda}$ (Here $p',q'$ depend on $p,q$ as well as $\lambda$, though our notation suppresses these dependencies).

Let $U$ be a unitary such that for any $\sigma \vdash (n-1)$ and $p,q\leq d_{\sigma}$ we have
\begin{equation}
U|0\rangle|\sigma,p,q\rangle=|0\rangle\sum_{\lambda \vdash n: \sigma \leq \lambda} \sqrt{\frac{d_{\lambda}}{n\cdot d_{\sigma}}}|\lambda,p',q'\rangle 
\label{eq:U}
\end{equation}
and such that 
\begin{equation}
U|\tau_i\rangle|\sigma,p,q\rangle=|\tau_i\rangle|\sigma,p,q\rangle  \quad \qquad i\in [n].
\label{eq:pkinvariant}
\end{equation}
 It is possible to construct such a unitary because the states appearing on the RHS of Eq.~\eqref{eq:U} for different values of $\sigma,p,q$ are orthonormal. However the action of $U$ in the rest of the Hilbert space is constrained by unitarity. The following lemma describes a property of $U$ that follows from this constraint.
\begin{lemma}
Let $P$ project onto the subspace spanned by basis states than encode partitions of $n-1$, that is, 
\[
\mathrm{span}\{|w\rangle|\sigma,p,q\rangle: w\in \{0,\tau_1,\ldots, \tau_n\}, \sigma \vdash (n-1), 1\leq p,q\leq d_{\sigma}\}.
\]
Suppose $g:S_n\rightarrow \mathbb{C}$ is any function such that $g(\pi)=0$ for all $\pi\in S_{n-1}$ (regarded as the subgroup of $S_n$ that fixes $n$). Then $PU^{\dagger}|0\rangle|\hat{g}\rangle=0$.
\label{lem:orthog}
\end{lemma}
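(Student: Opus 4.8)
The plan is to unfold the definitions and reduce the claim $PU^\dagger|0\rangle|\hat g\rangle = 0$ to a statement about where $|\hat g\rangle$ lives, using unitarity of $U$ together with the explicit action of $U$ on the two relevant classes of basis states. First I would observe that, by the hypothesis $g(\pi)=0$ for all $\pi\in S_{n-1}$, the function $g$ is supported on the nontrivial cosets $\tau_i S_{n-1}$ with $\tau_i\neq e$ (taking $\tau_1=e$, say, as the representative of $S_{n-1}$ itself); equivalently, in the notation of Eq.~\eqref{eq:defF}, we have $g=\sum_{j\neq 1}F^j$. Applying the recursive Fourier formula Eq.~\eqref{eq:dirsum} to $g$ shows that $\hat g(\lambda)=\sum_{k\neq 1}\rho_\lambda(\tau_k)\big(\bigoplus_{\lambda^-}\sqrt{d_\lambda/(n d_{\lambda^-})}\,\widehat{g_k}(\lambda^-)\big)$, so in the Fourier basis $|\hat g\rangle$ is a superposition over triples $|\lambda,i,j\rangle$ with $\lambda\vdash n$, obtained by acting with the $M(\tau_k)$ ($k\neq 1$) on states of the form $U|0\rangle|\sigma,p,q\rangle$. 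The point of separating off $\tau_1=e$ is that the $k=1$ term — which is the only one present when $g$ is supported on $S_{n-1}$ — is exactly the one that $U^\dagger$ would send back into the ``$|0\rangle$'' block.

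Next I would compute $PU^\dagger|0\rangle|\hat g\rangle$ directly. Since $U$ is unitary and acts as in Eqs.~\eqref{eq:U}--\eqref{eq:pkinvariant}, its adjoint $U^\dagger$ maps $|0\rangle\sum_{\lambda:\sigma\le\lambda}\sqrt{d_\lambda/(n d_\sigma)}|\lambda,p',q'\rangle \mapsto |0\rangle|\sigma,p,q\rangle$ and fixes $|\tau_i\rangle|\sigma,p,q\rangle$; crucially, unitarity forces $U^\dagger$ to map the orthogonal complement (within the $|0\rangle$ block) of the spans in Eq.~\eqref{eq:U} into the orthogonal complement of $P$'s range inside the $|0\rangle$ block — i.e.\ into states encoding partitions of $n$, not of $n-1$. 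Tracking the coefficient structure coming from Beals' construction, $|\hat g\rangle$ is the image under the relevant internal unitaries of a vector whose $|0\rangle$-component is precisely a combination of the vectors appearing on the RHS of Eq.~\eqref{eq:U} \emph{translated by} $\rho_\lambda(\tau_k)$ with $\tau_k\neq e$; these $M(\tau_k)$-translates with $k\neq 1$ are orthogonal to the ``clean'' RHS vectors of Eq.~\eqref{eq:U}, because the adapted (Young's-orthogonal) structure guarantees the $e$-term and the $\tau_k$-terms occupy orthogonal sectors after the coset register is traced through $W$. Hence $U^\dagger|0\rangle|\hat g\rangle$ has no component in $\mathrm{span}\{|w\rangle|\sigma,p,q\rangle:\sigma\vdash(n-1)\}$, which is exactly $PU^\dagger|0\rangle|\hat g\rangle=0$.

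The cleanest way to package the second paragraph, and the step I expect to be the main obstacle, is the unitarity bookkeeping: I need to argue that the only way a state of the form $|0\rangle|\lambda,\cdot,\cdot\rangle$ (with $\lambda\vdash n$) can be pulled back by $U^\dagger$ into the subspace spanned by $\{|0\rangle|\sigma,p,q\rangle:\sigma\vdash(n-1)\}$ is if it was already in the span of the RHS vectors of Eq.~\eqref{eq:U}, and then show that the component of $|\hat g\rangle$ lying in that span vanishes precisely because $g$ kills $S_{n-1}$. Concretely, I would write $|\hat g\rangle = \sum_k M(\tau_k)\,U|0\rangle|\hat{g_k}\rangle$ with the sum over $k$ such that $\tau_k S_{n-1}$ meets the support of $g$ — by hypothesis, $k\neq 1$ only — and then compute $\langle 0|\langle\sigma,p,q| U^\dagger |\hat g\rangle$ by moving $U^\dagger$ past $M(\tau_k)$; since $U^\dagger U=I$ but $U^\dagger M(\tau_k)U$ is not block-diagonal for $\tau_k\neq e$, and since the ``$|0\rangle|\sigma,p,q\rangle$'' block of $U^\dagger M(\tau_k)U|0\rangle|\sigma',p',q'\rangle$ picks out exactly the coset-$e$ piece of $\rho_\lambda(\tau_k)$ which is zero for $k\neq 1$ by the adapted property, every term vanishes. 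All the remaining work is routine tracking of indices through Eqs.~\eqref{eq:ft22}, \eqref{eq:dirsum} and \eqref{eq:U}, together with the observation that $W$ maps the support assumption on $g$ to the statement that the $\tau_1=e$ register value never occurs.
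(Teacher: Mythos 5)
Your high-level strategy is the right one and in fact mirrors the paper's: reduce $PU^\dagger|0\rangle|\hat g\rangle=0$ to (i) the $|0\rangle$-block component vanishing, and (ii) the $|\tau_i\rangle$-block components vanishing because $U^\dagger$ acts as the identity there (Eq.~\eqref{eq:pkinvariant}), hence preserves that sector and its complement, while $|0\rangle|\hat g\rangle$ has no $|\tau_i\rangle$ component to begin with. The issue is the ``concrete'' justification you give for (i). You write $|0\rangle|\hat g\rangle = \sum_{k\neq 1}M(\tau_k)U|0\rangle|\hat{g_k}\rangle$ and then claim that $\langle 0|\langle\sigma,p,q|U^\dagger M(\tau_k)U|0\rangle|\sigma',p',q'\rangle$ ``picks out the coset-$e$ piece of $\rho_\lambda(\tau_k)$ which is zero for $k\neq 1$ by the adapted property.'' That claim is false. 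The adapted (Young's orthogonal) structure only block-diagonalizes $\rho_\lambda(h)$ for $h\in S_{n-1}$; since $\tau_k\notin S_{n-1}$ for $k\neq 1$, the matrix $\rho_\lambda(\tau_k)$ is \emph{not} block-diagonal and its individual matrix elements across the $(\sigma,p)$ blocks are generically nonzero. Already for $n=2$ one has $\rho_{(2)}(\tau_2)=1\neq 0$ and $\rho_{(1,1)}(\tau_2)=-1\neq 0$; the quantity you want to vanish is $\sum_\lambda\frac{d_\lambda}{n\sqrt{d_\sigma d_{\sigma'}}}(\rho_\lambda(\tau_k))_{p',p'''}\delta_{q',q'''}$, which is zero only by a cancellation in the $\lambda$-sum, not termwise.

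The mechanism that actually makes (i) work — and the one the paper uses — is that the Fourier transform is an isometry together with the simple fact that functions on $S_n$ with disjoint supports are orthogonal. Concretely: every state in $\mathrm{range}(P_0)$ can be written as $|0\rangle|\hat r\rangle$ for some $r:S_{n-1}\to\mathbb{C}$, and by the $k=1$ case of Claim~\ref{claim:Fkfk} (equivalently, Eq.~\eqref{eq:U}) one has $U|0\rangle|\hat r\rangle=|0\rangle|\hat R\rangle$, where $R:S_n\to\mathbb{C}$ is the zero-extension of $r$. Then $\langle 0|\langle\hat r|\,U^\dagger|0\rangle|\hat g\rangle = \langle\hat R|\hat g\rangle=\langle R|g\rangle=0$, since $R$ is supported on $S_{n-1}$ and $g$ vanishes there. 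This gives $P_0U^\dagger|0\rangle|\hat g\rangle=0$ directly, with no need to expand $g=\sum_{k\neq 1}F^k$ or to analyze $U^\dagger M(\tau_k)U$ at all. If you want to retain your coset expansion, you would have to justify $\langle 0|\langle\sigma,p,q|U^\dagger M(\tau_k)U|0\rangle|\sigma',p',q'\rangle=0$ by exactly this disjoint-support/isometry argument (rewriting both sides as Fourier transforms of functions supported on $\tau_1 S_{n-1}$ and $\tau_k S_{n-1}$ respectively), not by an appeal to vanishing matrix elements of $\rho_\lambda(\tau_k)$.
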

\begin{proof}
Below we write $P=\sum_{k=0}^{n}P_k$ where $P_0=\mathrm{span}\{|0\rangle|\sigma,p,q\rangle: \sigma \vdash (n-1), 1\leq p,q\leq d_{\sigma}\}$ and $P_k=\mathrm{span}\{|\tau_k\rangle|\sigma,p,q\rangle: \sigma \vdash (n-1), 1\leq p,q\leq d_{\sigma}\}$ for $k\in [n]$.

First suppose that $\hat{r}$ is the Fourier transform of some (arbitrary) function $r:S_{n-1}\rightarrow \mathbb{C}$. Define $R:S_n\rightarrow \mathbb{C}$ such that $R(\pi)=r(\pi)$ for all $\pi\in S_{n-1}$, and $R(\pi)=0$ otherwise. Then from Eq.~\eqref{eq:U} we have
\begin{equation}
|0\rangle\sum_{\sigma\vdash n-1,p,q} (\hat{r}(\sigma))_{pq}|\sigma,p,q\rangle=U^{\dagger}|0\rangle\sum_{\lambda\vdash n, p',q'} (\hat{R}(\lambda))_{p',q'}|\lambda,p',q'\rangle. 
\label{eq:rR}
\end{equation}
Now consider a function $g:S_n\rightarrow \mathbb{C}$ as in the statement of the Lemma. Since 
$g(\pi)=0$ for all $\pi\in S_{n-1}$ we have $\langle g|R\rangle=0$ and therefore $\langle \hat{g}|\hat{R}\rangle=0$ and therefore 
$U^{\dagger}|0\rangle|\hat{g}\rangle$ is orthogonal to any state of the form on the LHS of Eq.~\eqref{eq:rR}. Noting that the LHS of Eq.~\eqref{eq:rR} is an arbitrary state in the image of $P_0$, we see that 
\begin{equation}
P_0U^{\dagger}|0\rangle|\hat{g}\rangle=0.
\label{eq:p0}
\end{equation}
 Then 
\begin{align*}
PU^{\dagger}|0\rangle|\hat{g}\rangle&=(\sum_{k=1}^{n}P_k)U^{\dagger}|0\rangle|\hat{g}\rangle+P_0U^{\dagger}|0\rangle|\hat{g}\rangle\\
&=(\sum_{k=1}^{n}P_k)U^{\dagger}|0\rangle|\hat{g}\rangle\\&=(\sum_{k=1}^{n}P_k)|0\rangle|\hat{g}\rangle\\
&=0\\
\end{align*}
where in the first step we used Eq.~\eqref{eq:p0} and in the following step we used Eq.~\eqref{eq:pkinvariant}. 
\end{proof}

Our definitions of $U, f_k, F^k$ and $M(\tau_k)$ are chosen so that the following holds.
\begin{claim}
For $1\leq k\leq n$ we have $M(\tau_k)U|0\rangle|\hat{f_k}\rangle=|0\rangle|\hat{F^k}\rangle$.
\label{claim:Fkfk}
\end{claim}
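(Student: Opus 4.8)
The plan is to unfold both sides of the claimed identity $M(\tau_k)U|0\rangle|\hat{f_k}\rangle=|0\rangle|\hat{F^k}\rangle$ into the Fourier basis and match coefficients, using the adapted property of Young's orthogonal representation encoded in Eq.~\eqref{eq:dirsum}. First I would start from the left: $|\hat{f_k}\rangle$ is a state on the $S_{n-1}$-Fourier register, $\sum_{\sigma\vdash n-1}\sum_{p,q}(\hat{f_k}(\sigma))_{pq}|\sigma,p,q\rangle$. Applying $U$ via Eq.~\eqref{eq:U} replaces each $|0\rangle|\sigma,p,q\rangle$ by $|0\rangle\sum_{\lambda:\sigma\leq\lambda}\sqrt{d_\lambda/(n\,d_\sigma)}\,|\lambda,p',q'\rangle$, where the index pair $(p',q')$ is the embedding of the $\sigma$-matrix block into the $\lambda$-matrix under restriction. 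The key bookkeeping observation is that, because Young's representation is adapted to the chain $S_{n-1}\subset S_n$, the collection of coefficients $\{(\hat{f_k}(\sigma))_{pq}\}$ reassembled at the embedded positions $(p',q')$ is exactly the block-diagonal matrix $\bigoplus_{\lambda^-\in\Phi(\lambda)}\hat{f_k}(\lambda^-)$ appearing inside the direct sum in Eq.~\eqref{eq:dirsum} (up to the $\sqrt{d_\lambda/(n\,d_{\lambda^-})}$ normalization, which is precisely the factor supplied by $U$). So after $U$ the state on the $\lambda$-register carries matrix $\bigoplus_{\lambda^-\in\Phi(\lambda)}\sqrt{d_\lambda/(n\,d_{\lambda^-})}\,\hat{f_k}(\lambda^-)$.

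Next I would apply $M(\tau_k)$, which by its defining action left-multiplies the $\lambda$-block by $\rho_\lambda(\tau_k)$. This turns the $\lambda$-block into $\rho_\lambda(\tau_k)\bigoplus_{\lambda^-\in\Phi(\lambda)}\sqrt{d_\lambda/(n\,d_{\lambda^-})}\,\hat{f_k}(\lambda^-)$, which is exactly the $k$-th summand of Eq.~\eqref{eq:dirsum} but with $\hat{f}$ replaced by $\hat{F^k}$: that is, applied to the single-coset function $F^k$ (supported on $\tau_k S_{n-1}$ with $F^k(\tau_k h)=f(\tau_k h)=f_k(h)$), Eq.~\eqref{eq:dirsum} has only the $k=k$ term surviving and reads $\widehat{F^k}(\lambda)=\rho_\lambda(\tau_k)\bigoplus_{\lambda^-\in\Phi(\lambda)}\sqrt{d_\lambda/(n\,d_{\lambda^-})}\,\hat{F^k}_{\text{restricted}}(\lambda^-)$, and $\hat{F^k}$ restricted to $S_{n-1}$-data is precisely $\hat{f_k}$ by Eq.~\eqref{eq:deff}. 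Hence the $\lambda$-register state after $M(\tau_k)U$ is $\sum_{\lambda,p,q}(\widehat{F^k}(\lambda))_{pq}|\lambda,p,q\rangle$ with the $|0\rangle$ flag untouched, i.e.\ $|0\rangle|\hat{F^k}\rangle$.

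The one thing that genuinely needs care — and where I expect the main obstacle to lie — is the index-matching step: verifying that the map $(\sigma,p,q)\mapsto(\lambda,p',q')$ induced by $U$ agrees, for each $\lambda$, with the direct-sum decomposition $\mathbb{C}^{d_\lambda}\cong\bigoplus_{\lambda^-\in\Phi(\lambda)}\mathbb{C}^{d_{\lambda^-}}$ that makes Eq.~\eqref{eq:dirsum} hold. This is exactly the content of the ``adapted'' property of Young's orthogonal representation: the branching $S_{n-1}\subset S_n$ is multiplicity-free, so each irreducible $\sigma\vdash n-1$ sits in $\rho_\lambda|_{S_{n-1}}$ at most once, and one can choose the basis of $\rho_\lambda$ so that restriction literally block-diagonalizes with the $\sigma$-block occupying a fixed set of coordinates — which is what licenses the identification ``$p,q$ of $\rho_\sigma$ with $p',q'$ of $\rho_\lambda$'' used in Eq.~\eqref{eq:U}. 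Once this identification is pinned down, the remaining equalities are just substituting definitions and comparing the two sides coefficient by coefficient; no estimates are involved. I would also remark that $U$ leaves the $|0\rangle$ flag in place throughout (its nontrivial action is only on the $|0\rangle$-flagged sector, per Eqs.~\eqref{eq:U}–\eqref{eq:pkinvariant}), and $M(\tau_k)$ never touches the flag, so the output flag is $|0\rangle$ as claimed.
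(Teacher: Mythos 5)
Your argument is correct and is essentially a detailed unfolding of the paper's own one-line proof, which simply cites the combination of Eqs.~\eqref{eq:dirsum}, \eqref{eq:U}, \eqref{eq:defF}, and \eqref{eq:deff}. You correctly observe that $U$ reassembles $\hat{f_k}$ into the block-diagonal $\bigoplus_{\lambda^-\in\Phi(\lambda)}\sqrt{d_\lambda/(n\,d_{\lambda^-})}\,\hat{f_k}(\lambda^-)$, that $M(\tau_k)$ left-multiplies by $\rho_\lambda(\tau_k)$, and that applying Eq.~\eqref{eq:dirsum} to the single-coset function $F^k$ collapses the $j$-sum to $j=k$ with $(F^k)_k=f_k$, which is exactly the matching step the paper has in mind.
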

\begin{proof}
Follows by combining Eqs.~(\ref{eq:dirsum}, \ref{eq:U}, \ref{eq:defF}, \ref{eq:deff}).
\end{proof}
Finally, for $k\in [n]$ let $V_k$ be the unitary that acts as
\begin{align}
V_k|0\rangle|\sigma,p,q\rangle &=|\tau_k\rangle|\sigma,p,q\rangle \\
V_k|\tau_k\rangle|\sigma,p,q\rangle &=|0\rangle|\sigma,p,q\rangle,
\end{align}
for all $\sigma\vdash n-1$ and $1\leq p,q\leq d_{\sigma}$,
and which acts as the identity on all other computational basis states.

The quantum Fourier transform over the symmetric group is described in Algorithm \ref{alg:qft}.
\begin{algorithm}[H]
\caption{Implements a unitary $\mathrm{QFT_n}$\label{alg:qft}}
\hspace*{\algorithmicindent}  \hspace{-22pt}\textbf{Input:}  A state $|f\rangle=\sum_{g\in S_n} f(g)|g\rangle$.\\
\hspace*{\algorithmicindent} \hspace{-27pt} \textbf{Output:} The state $|\hat{f}\rangle=\mathrm{QFT}_n|f\rangle$ corresponding to the Fourier transform of $f$ over $S_n$.\\

\begin{algorithmic}[1]
	\State{$|\phi\rangle\leftarrow W|\phi\rangle$} \Comment{$|\phi\rangle=\sum_{j=1}^{n} |\tau_j\rangle\sum_{h\in S_{n-1}} f_j(h)|h\rangle$}.
	\State{$|\phi\rangle \leftarrow (I\otimes \mathrm{QFT}_{n-1})|\phi\rangle$} \Comment{$|\phi\rangle=\sum_{j=1}^{n} |\tau_j\rangle \sum_{\sigma \vdash n-1} (\hat{f_j}(\sigma))_{pq} |\sigma,p,q\rangle$}
			\For{$k=1$ to $n$}
		\State{$|\phi\rangle\leftarrow M(\tau_k)^{-1}|\phi\rangle$ 
		}
			\State{$|\phi\rangle\leftarrow UV_kU^{\dagger}|\phi\rangle$}
		\State{$|\phi\rangle\leftarrow M(\tau_k)|\phi\rangle$ } 
			      \EndFor \Comment{$|\phi\rangle=|0\rangle|\hat{f}\rangle$ (see Lemma \ref{lem:forloop})} 
\State{\textbf{return} the second register $|\hat{f}\rangle$}.
		\end{algorithmic}
\end{algorithm}

The following lemma shows that the algorithm performs the quantum Fourier transform, as claimed.

\begin{lemma}
For $0\leq k\leq n$, the state after the $k$th iteration of the for loop in Algorithm \ref{alg:qft} is
\begin{equation}
|\phi_k\rangle=|0\rangle\sum_{j=1}^{k}|\hat{F^{j}}\rangle+\sum_{j=k+1}^{n} |\tau_j\rangle |\hat{f_j}\rangle.
\label{eq:phik}
\end{equation}
When $k=0$ this describes the state before the first iteration (in this case the first term is not present). For $k=n$ we have $|\phi_n\rangle=|0\rangle\sum_{j=1}^{n}|\hat{F^{j}}\rangle=|0\rangle|\hat{f}\rangle$.
\label{lem:forloop}
\end{lemma}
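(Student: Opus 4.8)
\textbf{Proof strategy for Lemma~\ref{lem:forloop}.}
The plan is to establish Eq.~(\ref{eq:phik}) by induction on the loop counter $k$ and then read off the $k=n$ case. For the base case $k=0$ I would simply unwind lines~1--2 of Algorithm~\ref{alg:qft}: applying $W$ to $|f\rangle=\sum_{g\in S_n} f(g)|g\rangle$ and using the unique coset factorization $g=\tau_j h$ with $h\in S_{n-1}$ gives $\sum_{j=1}^{n}|\tau_j\rangle\sum_{h\in S_{n-1}} f_j(h)|h\rangle$ by the definition of $f_j$ in Eq.~(\ref{eq:deff}), and then $I\otimes\mathrm{QFT}_{n-1}$ replaces the second register of the $j$-th term by $|\hat{f_j}\rangle$; this is exactly Eq.~(\ref{eq:phik}) with the (empty) first sum dropped.

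For the inductive step I would assume $|\phi_{k-1}\rangle$ has the claimed form and track the action of lines~4--6, i.e.\ of the composite $M(\tau_k)\,U V_k U^{\dagger}\,M(\tau_k)^{-1}$, on the three kinds of terms present in $|\phi_{k-1}\rangle$: (i) the already-processed terms $|0\rangle|\hat{F^j}\rangle$ for $j<k$, whose second register encodes a partition of $n$; (ii) the term $|\tau_k\rangle|\hat{f_k}\rangle$ currently being processed; and (iii) the untouched terms $|\tau_j\rangle|\hat{f_j}\rangle$ for $j>k$. Terms of type~(iii) have a second register encoding a partition of $n-1$, so $M(\tau_k)^{\pm 1}$ acts as the identity by definition, $U$ and $U^{\dagger}$ act as the identity by Eq.~(\ref{eq:pkinvariant}), and $V_k$ fixes $|\tau_j\rangle|\sigma,p,q\rangle$ for $j\neq k$; hence they survive unchanged. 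For the type-(ii) term, $M(\tau_k)^{-1}$ again does nothing, $U^{\dagger}$ fixes $|\tau_k\rangle|\sigma,p,q\rangle$, $V_k$ sends it to $|0\rangle|\hat{f_k}\rangle$, and then applying $U$ followed by $M(\tau_k)$ gives $|0\rangle|\hat{F^k}\rangle$ by Claim~\ref{claim:Fkfk}; so this term is replaced by $|0\rangle|\hat{F^k}\rangle$, which is exactly the change needed to pass from $|\phi_{k-1}\rangle$ to $|\phi_k\rangle$.

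The crux is the type-(i) terms $|0\rangle|\hat{F^j}\rangle$ with $j<k$: here $M(\tau_k)^{-1}$ is \emph{not} trivial, so one must show that $U V_k U^{\dagger}$ leaves $M(\tau_k)^{-1}|0\rangle|\hat{F^j}\rangle$ invariant, after which the trailing $M(\tau_k)$ cancels the leading $M(\tau_k)^{-1}$. I would first compute, using Eq.~(\ref{eq:ft22}) together with $\rho_\lambda(\tau_k^{-1}\pi)=\rho_\lambda(\tau_k)^{-1}\rho_\lambda(\pi)$, that $M(\tau_k)^{-1}|0\rangle|\hat{F^j}\rangle=|0\rangle|\hat{G}\rangle$ where $G(\pi)\equiv F^j(\tau_k\pi)$. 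Since $F^j$ is supported on $\tau_j S_{n-1}$ (Eq.~(\ref{eq:defF})) and the cosets $\tau_k^{-1}\tau_j S_{n-1}$ and $S_{n-1}$ are disjoint whenever $j\neq k$, the function $G$ vanishes on $S_{n-1}$, so Lemma~\ref{lem:orthog} yields $P U^{\dagger}|0\rangle|\hat{G}\rangle=0$; that is, $U^{\dagger}|0\rangle|\hat{G}\rangle$ is orthogonal to the image of $P$, which in particular contains $\mathrm{span}\{|0\rangle|\sigma,p,q\rangle,\ |\tau_k\rangle|\sigma,p,q\rangle:\sigma\vdash n-1\}$---exactly the subspace on which $V_k$ acts nontrivially. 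Hence $V_k U^{\dagger}|0\rangle|\hat{G}\rangle=U^{\dagger}|0\rangle|\hat{G}\rangle$, so $U V_k U^{\dagger}|0\rangle|\hat{G}\rangle=|0\rangle|\hat{G}\rangle$, and applying $M(\tau_k)$ returns $|0\rangle|\hat{F^j}\rangle$, as required.

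Collecting the three cases gives $|\phi_k\rangle$ in the claimed form, completing the induction. Finally, for $k=n$ the first sum in Eq.~(\ref{eq:phik}) runs over all $j$ and the second sum is empty, so $|\phi_n\rangle=|0\rangle\sum_{j=1}^{n}|\hat{F^j}\rangle$; since $f=\sum_{j=1}^{n}F^j$ by Eq.~(\ref{eq:defF}) and the Fourier transform is linear, this equals $|0\rangle|\hat{f}\rangle$, so line~7 outputs $|\hat{f}\rangle=\mathrm{QFT}_n|f\rangle$ and Algorithm~\ref{alg:qft} is correct. The only delicate point I anticipate is the type-(i) bookkeeping just described---keeping straight which register each of $M(\tau_k)$, $U$, $V_k$ acts on, and invoking Lemma~\ref{lem:orthog} with the correct auxiliary function $G$.
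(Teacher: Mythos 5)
Your proposal is correct and follows essentially the same route as the paper's proof: induction on $k$, with the type-(ii) and type-(iii) terms handled via Eq.~\eqref{eq:pkinvariant} and Claim~\ref{claim:Fkfk}, and the already-processed type-(i) terms handled by observing that $M(\tau_k)^{-1}|\hat{F^j}\rangle$ is the Fourier transform of $g(\pi)=F^j(\tau_k\pi)$, which vanishes on $S_{n-1}$, so Lemma~\ref{lem:orthog} forces $UV_kU^\dagger$ to act trivially. The only difference is organizational---you explicitly enumerate the three kinds of terms, while the paper treats $j\geq k$ more compactly---but the underlying argument is identical.
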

\begin{proof}
By induction on $k$. The base case $k=0$ corresponds to the initial state $\sum_{j=1}^{n} |\tau_j\rangle |\hat{f_j}\rangle$. 

 Now suppose the state after the $(k-1)$-th iteration is given by $|\phi_{k-1}\rangle$ as described by Eq.~\eqref{eq:phik}. Then after line 4 of Algorithm \ref{alg:qft} the state is
\begin{align}
M(\tau_k^{-1})|\phi_{k-1}\rangle&=|0\rangle\sum_{j=1}^{k-1}M(\tau_k^{-1})|\hat{F^{j}}\rangle+\sum_{j=k}^{n} |\tau_j\rangle |\hat{f_j}\rangle.
\end{align}
For each $j\leq k-1$, the state $M(\tau_k^{-1})|\hat{F^j}\rangle$ is the Fourier transform of a function 
$g(\pi)=F^j(\tau_k \pi)$ which is zero on $S_{n-1}$.
So for $j\leq k-1$, by Lemma \ref{lem:orthog} we have $P U^{\dagger}|0\rangle M(\tau_k^{-1})|\hat{F^j}\rangle =0$ where $P$ projects onto basis states that encode partitions of $n-1$. Therefore $V_k$ acts trivially on $U^{\dagger}|0\rangle M(\tau_k^{-1})|\hat{F^j}\rangle$  for all $j\leq k-1$, i.e., 
\begin{align}
V_kU^{\dagger}M(\tau_k^{-1})|\phi_{k-1}\rangle&=U^{\dagger}|0\rangle\sum_{j=1}^{k-1}M(\tau_k^{-1})|\hat{F^j}\rangle+\sum_{j=k}^{n} V_k|\tau_j\rangle |\hat{f_j}\rangle\\
&=U^{\dagger}|0\rangle\sum_{j=1}^{k-1}M(\tau_k^{-1})|\hat{F^j}\rangle+|0\rangle|\hat{f_k}\rangle+\sum_{j=k+1}^{n} |\tau_j\rangle |\hat{f_j}\rangle.
\end{align}
and, applying $M(\tau_k)U$ to the above gives
\begin{align}
|\phi_k\rangle&=|0\rangle\sum_{j=1}^{k-1}|\hat{F^j}\rangle+M(\tau_k)U|0\rangle|\hat{f_k}\rangle+\sum_{j=k+1}^{n} |\tau_j\rangle |\hat{f_j}\rangle\\
&=|0\rangle\sum_{j=1}^{k}|\hat{F^j}\rangle+\sum_{j=k+1}^{n} |\tau_j\rangle |\hat{f_j}\rangle
\end{align}
where we used Claim \ref{claim:Fkfk}. This completes the induction, and the proof.
\end{proof}

\section{ Generalized phase estimation}
In this section we will show that the circuits from Figure~\ref{fig:gpe} implement weak Fourier sampling and generalized phase estimation respectively. The following Lemma shows that the circuit in Figure \ref{fig:gpe}(a) implements weak Fourier sampling. We will make use of the inverse Fourier transform $\mathrm{QFT}_n^{\dagger}$ which acts as:
\begin{equation}
\mathrm{QFT}_n^{\dagger} \sum_{\omega \vdash n}\sum_{1\leq i,j\leq d_{\omega}} (c(\omega))_{ij} |\omega, i,j\rangle=\sum_{\sigma\in S_n}\sum_{\lambda \vdash n} \sqrt{\frac{d_\lambda}{n!}} \trace{\left(c(\lambda) \rho_{\lambda}(\sigma)^{\dagger}\right)}|\sigma\rangle.
\label{eq:invf}
\end{equation}

 \begin{lemma}
 The POVM $M_L$ can be implemented by applying $\mathrm{QFT}_n$, then performing a projective measurement of the representation label $\omega$, then appplying $\mathrm{QFT}_n^{\dagger}$.
 \label{Lemma:partition_measurement}
 \end{lemma}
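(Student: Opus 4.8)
The plan is to establish the operator identity
\begin{equation}
\mathrm{QFT}_n^{\dagger}\,P_{\omega}\,\mathrm{QFT}_n=\Pi^{L}_{\omega},\qquad P_{\omega}\equiv\sum_{i,j=1}^{d_{\omega}}|\omega,i,j\rangle\langle\omega,i,j|,
\label{eq:ML-opident}
\end{equation}
for every $\omega\vdash n$. The operator $P_{\omega}$ is exactly the projector associated with the outcome $\omega$ of ``a projective measurement of the representation label'', so \eqref{eq:ML-opident} says that conjugating that measurement by $\mathrm{QFT}_n$ reproduces the family $\{\Pi^{L}_{\omega}\}_{\omega\vdash n}=M_L$; this is precisely the circuit of Fig.~\ref{fig:gpe}(a) (the ancilla in the figure merely records the value of $\omega$, which by deferred measurement is equivalent to measuring the label register directly). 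Since $\{\Pi^{L}_{\omega}\}$ was already shown above to be a complete family of mutually orthogonal projectors, once \eqref{eq:ML-opident} is proven there is nothing left to check.

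To prove \eqref{eq:ML-opident} I would compare matrix elements in the group basis $\{|\pi\rangle\}_{\pi\in S_n}$. Applying \eqref{eq:ft11}--\eqref{eq:ft22} to the indicator function $f=\delta_{\sigma}$ gives $\mathrm{QFT}_n|\sigma\rangle=\sum_{\omega}\sqrt{d_{\omega}/n!}\sum_{i,j}(\rho_{\omega}(\sigma))_{ij}|\omega,i,j\rangle$, hence
\begin{equation}
\langle\pi|\,\mathrm{QFT}_n^{\dagger}\,P_{\omega}\,\mathrm{QFT}_n\,|\sigma\rangle=\frac{d_{\omega}}{n!}\sum_{i,j}\overline{(\rho_{\omega}(\pi))_{ij}}\,(\rho_{\omega}(\sigma))_{ij}=\frac{d_{\omega}}{n!}\,\mathrm{Tr}\!\big(\rho_{\omega}(\pi)^{\dagger}\rho_{\omega}(\sigma)\big)=\frac{d_{\omega}}{n!}\,\chi_{\omega}(\pi^{-1}\sigma),
\label{eq:ML-lhs-me}
\end{equation}
where the last step uses unitarity of $\rho_{\omega}$. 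On the other hand, from the definition \eqref{eq:pilambda} with $\rho$ the left-regular representation, $\langle\pi|\Pi^{L}_{\omega}|\sigma\rangle=\frac{d_{\omega}}{n!}\sum_{\alpha\in S_n}\chi_{\omega}(\alpha)\langle\pi|\alpha\sigma\rangle=\frac{d_{\omega}}{n!}\chi_{\omega}(\pi\sigma^{-1})$. Finally $\chi_{\omega}(\pi^{-1}\sigma)=\chi_{\omega}(\sigma\pi^{-1})=\chi_{\omega}((\pi\sigma^{-1})^{-1})=\chi_{\omega}(\pi\sigma^{-1})$, where the first equality is cyclicity of the trace and the last uses that every permutation is conjugate to its inverse, so the characters of $S_n$ are real-valued. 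Hence both sides of \eqref{eq:ML-opident} have identical matrix elements, which proves the identity and the lemma.

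I do not anticipate any real obstacle: this is essentially bookkeeping. The two points that require care are (i) keeping track of where the factor $\sqrt{d_{\omega}/n!}$ sits, since the appendix convention \eqref{eq:ft11}--\eqref{eq:ft22} places it inside $\hat f$ whereas the main-text convention \eqref{eq:ft1} does not, and (ii) the last character manipulation reconciling $\chi_{\omega}(\pi^{-1}\sigma)$ with $\chi_{\omega}(\pi\sigma^{-1})$. An equivalent state-based route is to act on a generic $|f\rangle=\sum_{\sigma}f(\sigma)|\sigma\rangle$, project onto the label $\omega$, apply $\mathrm{QFT}_n^{\dagger}$ via \eqref{eq:invf}, and use $\mathrm{Tr}\!\big(\hat f(\omega)\rho_{\omega}(\sigma)^{\dagger}\big)=\sum_{\alpha}f(\alpha)\chi_{\omega}(\alpha\sigma^{-1})$ to recognize the output as $\Pi^{L}_{\omega}|f\rangle$.
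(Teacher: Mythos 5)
Your proof is correct and follows essentially the same approach as the paper: reduce the lemma to the operator identity $\mathrm{QFT}_n^{\dagger} P_{\omega}\,\mathrm{QFT}_n = \Pi^{L}_{\omega}$ and verify it by direct computation. The paper's proof checks the identity by acting on a generic state $\sum_{\sigma} f(\sigma)\ket{\sigma}$ and invoking the inverse-transform formula, whereas you compare matrix elements $\langle\pi|\cdot|\sigma\rangle$ and reconcile $\chi_{\omega}(\pi^{-1}\sigma)$ with $\chi_{\omega}(\pi\sigma^{-1})$ via the class-function and $\chi(g)=\chi(g^{-1})$ facts --- the same bookkeeping, just organized in the group basis rather than the function picture (as you yourself note in your closing remark about the equivalent state-based route).
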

 \begin{proof}
 Let a partition $\lambda \vdash n$ be given.  Let $P_{\lambda}$ denote the projector such that $P_{\lambda}|\omega,i,j\rangle=\delta_{\omega, \lambda} |\omega,i,j\rangle.$ It suffices to show that for any state $|\psi\rangle\in \mathcal{H}$ we have $\Pi^L_{\lambda}|\psi\rangle=\mathrm{QFT}_n^{\dagger} P_{\lambda} \mathrm{QFT}_n|\psi\rangle$. To this end let $|\psi\rangle=\sum_{\sigma \in S_n} f(\sigma)|\sigma\rangle$. Using Eq.~\eqref{eq:pilambda} we get
 \begin{align}
 \Pi_{\lambda}|\psi\rangle& =\frac{d_{\lambda}}{n!} \sum_{\sigma\in S_n}\sum_{\alpha\in S_n}\chi_{\lambda}(\alpha)f(\sigma)|\alpha \sigma\rangle\\
 &=\frac{d_{\lambda}}{n!}\sum_{\sigma,\beta\in S_n} \chi_{\lambda}(\beta\sigma^{-1})f(\sigma)|\beta\rangle =\frac{d_{\lambda}}{n!}\sum_{\sigma,\beta\in S_n} \chi_{\lambda}(\sigma \beta^{-1})f(\sigma)|\beta\rangle,
 \label{eq:chif1}
 \end{align}
 where in the last line we used the fact that $\chi_\lambda (g)=\chi_{\lambda}(g^{-1})$ for all $g\in S_n$  (this can be seen for example using the fact that group characters are class functions and $g$ and $g^{-1}$ are always in the same conjugacy class since their cycle structure coincides). Now using Eq.~\eqref{eq:chif1} and the fact that 
 \[
 \chi_\lambda(\sigma \beta^{-1})=\trace{\left(\rho_\lambda(\sigma \beta^{-1})\right)}=\trace{\left(\rho_\lambda(\sigma)\rho_{\lambda}( \beta)^{\dagger}\right)}
 \]
 gives
 \begin{equation}
 \Pi^L_{\lambda}|\psi\rangle=\sqrt{\frac{d_{\lambda}}{n!}}\sum_{\beta\in S_n}\trace{\left(\hat{f}(\lambda)\rho_{\lambda}(\beta)^{\dagger}\right)}|\beta\rangle\label{eq:povm1}.
 \end{equation}
 
 Below we show this is equal to $\mathrm{QFT}_n^{\dagger}P_\lambda \mathrm{QFT}_n|\psi\rangle$. We have:
 \begin{align}
 \mathrm{QFT}_n^{\dagger}P_\lambda \mathrm{QFT}_n|\psi\rangle&= \mathrm{QFT}_n^{\dagger}P_\lambda\sum_{\omega\vdash n}\sum_{i,j=1}^{d_{\omega}} (\hat{f}(\omega))_{ij}|\omega,i,j\rangle \\
 &=\mathrm{QFT}_n^{\dagger}  \sum_{i,j=1}^{d_{\lambda}} (\hat{f}(\lambda))_{ij}|\lambda,i,j\rangle =\sqrt{\frac{d_{\lambda}}{n!}}\sum_{\beta\in S_n}\trace{\left(\hat{f}(\lambda)\rho_{\lambda}(\beta)^{\dagger}\right)}|\beta\rangle,
 \end{align}
 where we used Eq.~\eqref{eq:invf} in the last line. This coincides with Eq.~\eqref{eq:povm1} and completes the proof.
 \end{proof}
Next we show that the circuit in Fig.~\ref{fig:gpe} (b) implements the generalized phase estimation measurement $M_{\rho}$:
\begin{align}
|\tau,1,1\rangle\otimes |\psi\rangle  \xrightarrow[]{\mathrm{QFT}_n^{\dagger}\otimes I} \frac{1}{\sqrt{n!}}&\sum_{\alpha\in S_n}|\alpha\rangle \otimes |\psi\rangle \\
\xrightarrow[]{C-\rho^{\dagger}} 
&\frac{1}{\sqrt{n!}}\sum_{\alpha\in S_n}|\alpha\rangle \otimes \rho^{\dagger}(\alpha)|\psi\rangle \\
 \xrightarrow[]{\text{Measure } M_L} 
&\frac{1}{\sqrt{n!}}\sum_{\alpha\in S_n}\Pi^L_{\omega}|\alpha\rangle \otimes \rho^{\dagger}(\alpha)|\psi\rangle \quad \quad (\text{conditioned on meas. outcome }\omega)\\
=&\frac{d_{\omega}}{(n!)^{3/2}}\sum_{\alpha,\sigma\in S_n} \chi_{\omega}(\sigma)|\sigma\alpha\rangle \otimes \rho^{\dagger}(\alpha)|\psi\rangle\\
\xrightarrow[]{C-\rho} & \frac{d_{\omega}}{(n!)^{3/2}}\sum_{\alpha,\sigma\in S_n} \chi_{\omega}(\sigma)|\sigma\alpha\rangle \otimes \rho(\sigma \alpha)\rho^{\dagger}(\alpha)|\psi\rangle.\label{eq:outputstate}
\end{align}
Using the fact that $\rho(\sigma\alpha)=\rho(\sigma)\rho(\alpha)$ and that $\rho^{\dagger}(\alpha)=\rho(\alpha)^{-1}$ we see that the output state Eq.~\eqref{eq:outputstate} is equal to
\begin{align}
 \frac{d_{\omega}}{(n!)^{3/2}}\sum_{\alpha,\sigma\in S_n} \chi_{\omega}(\sigma)|\sigma\alpha\rangle \otimes \rho(\sigma)|\psi\rangle
=\frac{1}{\sqrt{n!}} \sum_{\beta\in S_n}|\beta\rangle \otimes \frac{d_\omega}{n!}\sum_{\sigma\in S_n} \chi_{\omega}(\sigma) \rho(\sigma)|\psi\rangle.
\end{align}
On the right hand side the state of the second register is $\Pi_{\omega}|\psi\rangle$, as desired.

\section{$\QMA$, $\#\BQP$, and $\QAPC$}
Recall the definition of a verification circuit from the main text. In the following we write $p(\psi)$ for the acceptance probability of a verification circuit with input state $\psi$. 

A (promise) problem is in $\QMA$ \textit{with completeness $c$ and soundness $s$}, also denoted $\QMA(c,s)$, if there exists a uniform polynomial-sized family of verification circuits $C_x$ labeled by instances $x$ of the problem, such that (A) If $x$ is a yes instance then there exists a witness $\psi$ such that $p(\psi)\geq c$, and (B) If $x$ is a no instance then $p(\psi)\leq s$ for all  $\psi$.  

A standard convention is to define $\QMA\equiv \QMA(2/3,1/3)$. The definition is not very sensitive to the choice of completeness $c$ and soundness $s$: it is known that $\QMA(c,s)=\QMA(2/3,1/3)$ whenever $0<s<c\leq 1$ and $c-s=\Omega(1/\mathrm{poly}(n))$.  In fact, one can reduce the error of any $\QMA$ verifier exponentially in the input size $n$, so that, e.g.,  $\QMA(2/3,1/3)=\QMA(4^{-n},1-4^{-n})$ as well \cite{kitaev2002classical, marriott2005quantum}. 

We choose to define $\#\BQP$ as a class of relation problems rather than a class of functions, languages (decision problems), or promise problems. Recall from the main text that Eq.~\eqref{eq:POVM} describes the POVM element corresponding to the accept outcome of a quantum verification circuit and $N_q$ denotes the number of eigenvalues of $A$ that are at least $q$. We define $\#\BQP$ as the class of problems that can be expressed as: given a quantum verification circuit with an $n$-qubit input register, total size $\mathrm{poly}(n)$, and two thresholds $0<b<a\leq 1$ and $a-b=\Omega(1/\mathrm{poly}(n))$, compute an estimate $N$ such that $N_a\leq N\leq N_b$. This defines a relation problem because there may be more than one solution for a given instance (this happens whenever $N_a\neq N_b$).

 Ref.~\cite{brown2011computational} puts forward a slightly different definition of $\#\BQP$, as the class of promise problems obtained by augmenting and specializing the above definition with the promise that $N_a=N_b$. This has the pleasing feature that it results in a class of functions rather than relations, though at the expense of adding a promise.

One of the main results established in Ref.~\cite{brown2011computational} is that any problem in  $\#\BQP$ (as they define it) can be efficiently reduced to computing a $\#\P$ function. We now describe how their proof  can be straightforwardly adapted to our setting, establishing the same result for the definition of $\#\BQP$ that we use in this paper.  Using the in-place error reduction for $\QMA$ \cite{marriott2005quantum} we may assume without loss of generality that $a=1-2^{-n-2}$ and $b=2^{-n-2}$ (see, e.g. the proof of Lemma 8 in the supplementary material of Ref.~\cite{bravyi2022quantum}).
Then 
\begin{equation}
\mathrm{Tr}(A)\leq N_b+2^{-n-2}\cdot (2^n-N_b)\leq N_b+1/4,
\label{eq:trup}
\end{equation}
and
\begin{equation}
\mathrm{Tr}(A)\geq N_a (1-2^{-n-2})\geq N_a-1/4.
\label{eq:trlow}
\end{equation}
Eqs.~(\ref{eq:trup},\ref{eq:trlow}) show that the nearest integer $N$ to $\mathrm{Tr}(A)$ solves the given $\#\BQP$ problem as it satisfies $N_a\leq N\leq N_b$. It then suffices to show that we can compute $\mathrm{Tr}(A)$ to within an additive error of $1/4$ using efficient classical computation along with a single call to a $\#\P$ oracle. This is established in Theorem 14 of Ref.~\cite{brown2011computational} via a proof that is based on expressing $\mathrm{Tr}(A)$ as a sum over paths, see Ref.~\cite{brown2011computational} for details.

\section{Efficient sampling algorithm}
Now let us describe the algorithm stated in Lemma \ref{lem:samplekron}, which appears implicitly in Ref.~\cite{Moore07}. 

The completely mixed state over the image of the projector $\Pi^L_{\mu}$ can be prepared as follows.
\begin{align}
    \frac{\Pi^L_\mu}{d_\mu^2} &= \text{QFT}_n^\dag \left( \ket{\mu} \bra{\mu} \otimes I/d_{\mu} \otimes  I/d_{\mu}\right) \text{QFT}_n. 
\end{align}
From Eq.~\ref{eq:invf}: 
\begin{align}
    \text{QFT}_n^\dag \ket{\mu, i, j} &= \text{QFT}_n^\dag \sum_{\lambda \vdash n} \sum_{1 \leq k,l \leq d_\lambda} \underbrace{(\delta_{\lambda \mu} \delta_{ki} \delta_{kj})}_{c(\lambda)} \ket{\lambda, k, l} = \sum_\sigma \sqrt{\frac{d_\mu}{n!}} \rho_\mu^\dag(\sigma)_{ji} \ket{\sigma}.
\end{align}
There is a basis in which $\rho_\lambda$ are orthogonal matrices (Young's orthogonal form). It follows in this basis that $\rho_\mu^\dag(\sigma)_{ji} = \rho_\mu(\sigma)_{ij}$, so that we can write: 
\begin{align}
    \text{QFT}_n^\dag \ket{\mu, i, j} &= \sum_\sigma \sqrt{\frac{d_\mu}{n!}} \rho_\mu(\sigma)_{ij} \ket{\sigma},
\end{align}
and 
\begin{align}
   \frac{\Pi_\mu^L}{d_\mu^2} &= \frac{1}{d_\mu^2}\sum_{1 \leq i, j\leq d_\mu} \text{QFT}_n^\dag  \ket{\mu, i,j}\bra{\mu, i,j}\text{QFT}_n =  \sum_{\alpha, \beta \in S_n} \frac{1}{n! d_\mu} \chi_\mu(\alpha \beta^{-1}) \ket{\alpha} \bra{\beta},
\end{align}
We will need the projector formula: 
$\Pi_\lambda = {\frac{d_\lambda}{n!}} \sum_g \chi_\lambda(g) R(g) \otimes R(g)$, where $R$ is the regular representation. We already proved above how to implement this measurement using Generalized Phase Estimation. For completeness, we can verify the projection property of this operator as follows: 
\begin{align}
    \Pi_\lambda^2 &= \left(\frac{d_\lambda}{n!} \right)^2 \sum_{h \in S_n} \chi_\lambda(h) R(h) \otimes R(h)\sum_{g \in S_n} \chi_\lambda(g) R(g) \otimes R(g) \\ &= \left( \frac{d_\lambda}{n!} \right)^2 \sum_{g \in {S_n}} \sum_{g \in {S_n}} \chi_\lambda(h) \chi_\lambda(h^{-1}g) R(g) \otimes R(g) \\
    &= {\frac{d_\lambda}{n!}} \sum_{g \in S_n} \chi_\lambda(g) R(g) \otimes R(g),
\end{align}
where the last line follows from the convolution formula $\chi_\lambda * \chi_\lambda(s) = \sum_h \chi_\lambda(h) \chi_\lambda(h^{-1}s) = \frac{n!}{d_\lambda}\chi_\lambda(s)$. We now use that $\braket{a|R(g)|b} = \braket{a|gb} = \delta_{a,gb}$ to show:
\begin{align}
\frac{1}{d_\mu^2 d_\nu^2}
    \text{Tr}\left( (\Pi_\mu^L \otimes \Pi_\nu^L) \Pi_\lambda \right) &= \sum_{a,b \in S_n} \sum_{c,d \in S_n} \sum_{g \in S_n} \frac{d_\lambda}{n!^3 d_\mu d_\nu}\chi_\mu(ab^{-1}) \chi_\nu(cd^{-1}) \chi_\lambda(g) \text{Tr} \left( \ket{ac} \bra{bd} R(g) \otimes R(g) \right) \\
    &= \sum_{a,d,g \in S_n}\frac{d_\lambda}{n!^3 d_\mu d_\nu}\chi_\mu(g^{-1}) \chi_\nu(g) \chi_\lambda(g) = \sum_{g \in S_n} \frac{d_\lambda}{d_\mu d_\nu}\frac{ \chi_\mu(g^{-1}) \chi_\nu(g) \chi_\lambda(g)}{n!}.
\end{align}
Since $S_n$ has orthogonal representation for its matrices, we have that:
\begin{align}
    \chi_\mu(g^{-1}) = \text{Tr}(\rho_\lambda(g)^T) = \text{Tr}(\rho_\lambda(g)) =  \chi_\mu(g),
\end{align}
from which:  
\begin{align}
    p(\lambda) &= \text{Tr} \left(\frac{\Pi^L_\mu \otimes \Pi^L_\nu}{d_\mu^2 d_\nu^2} \Pi^L_\lambda \right)   =\sum_{g \in S_n} \frac{d_\lambda}{d_\mu d_\nu}\frac{ \chi_\mu(g) \chi_\nu(g) \chi_\lambda(g)}{n!} =  \frac{d_\lambda}{d_\mu d_\nu} g_{\mu \nu \lambda}.
\end{align}

\section{Row sums}

\begin{lemma}
The multiplicity of an irreducible representation $\rho_\lambda$ in the conjugation representation $\rho_c$ is given by the row sum $R_{\lambda}$ defined in Eq.~\eqref{eq:rowsum}.

\end{lemma}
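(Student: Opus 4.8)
The plan is to compute the multiplicity by taking the inner product of the character of the conjugation representation $\rho_c$ with the irreducible character $\chi_\lambda$, using the standard orthogonality relations for characters. First I would identify the character $\chi_c$ of the conjugation representation acting on $\mathbb{C}S_n$ via $\rho_c(\pi)|\sigma\rangle = |\pi\sigma\pi^{-1}\rangle$. The matrix of $\rho_c(\pi)$ in the permutation basis is a permutation matrix, so its trace counts the fixed points: $\chi_c(\pi) = \#\{\sigma \in S_n : \pi\sigma\pi^{-1} = \sigma\} = |Z(\pi)|$, the order of the centralizer of $\pi$.

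Next I would write the multiplicity of $\rho_\lambda$ in $\rho_c$ as
\begin{equation}
m_\lambda = \frac{1}{n!}\sum_{\pi \in S_n} \chi_c(\pi)\,\overline{\chi_\lambda(\pi)} = \frac{1}{n!}\sum_{\pi \in S_n} |Z(\pi)|\,\chi_\lambda(\pi),
\end{equation}
using that $\chi_\lambda$ is real (indeed integer-valued). Then I would group the sum over $\pi$ by conjugacy classes: if $C$ is a conjugacy class with representative having cycle type $\mu \vdash n$, then every $\pi \in C$ contributes the same value, $|C| = n!/|Z(\pi)|$ elements each contributing $|Z(\pi)|\chi_\lambda(\mu)$, so the class $C$ contributes $n!\,\chi_\lambda(\mu)$ in total. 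Dividing by $n!$, the sum telescopes to $m_\lambda = \sum_{\mu \vdash n} \chi_\lambda(\mu) = R_\lambda$, since conjugacy classes of $S_n$ are indexed by partitions $\mu \vdash n$.

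The only mildly subtle point — and the one I would be careful to state cleanly — is the identification $\chi_c(\pi) = |Z(\pi)|$ together with the cancellation of the centralizer factor against the class size $|C| = [S_n : Z(\pi)]$; everything else is the orbit–stabilizer theorem and the first orthogonality relation. I expect no real obstacle here: once the character of $\rho_c$ is correctly computed, the result follows from a one-line class-sum manipulation. For completeness I would also remark that this gives the stated decomposition $\rho_c \simeq \bigoplus_\lambda R_\lambda \rho_\lambda$ and, combined with $\Pi^c_\lambda$ being the projector of the form Eq.~\eqref{eq:pilambda} with $\rho \to \rho_c$, yields $\mathrm{Tr}(\Pi^c_\lambda) = d_\lambda R_\lambda$ as used in Theorem~\ref{thm:rowsum}.
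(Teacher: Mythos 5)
Your proof is correct and follows essentially the same route as the paper: compute $\chi_c(\pi) = |Z(\pi)|$ from the fixed-point count of the permutation matrix, then apply the first orthogonality relation, group by conjugacy classes, and use $|C|\cdot|Z(C)| = n!$ (orbit--stabilizer / Lagrange) to cancel the factors and obtain $m_\lambda = \sum_{\mu\vdash n}\chi_\lambda(\mu) = R_\lambda$. The only cosmetic difference is that you explicitly flag the reality of $\chi_\lambda$ when dropping the complex conjugate, which the paper leaves implicit.
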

\begin{proof}
Recall that the centralizer $Z(\pi)$ of an element $\pi \in S_n$ is defined as the set of elements $g \in S_n$ that commute with $\pi$. The conjugation representation acts on $\mathbb{C}S_n$ as
\[
\rho_c(\pi)|\sigma\rangle = |\pi \sigma \pi^{-1}\rangle 
\]
and therefore 
\[
\chi_{c}(\pi)\equiv \mathrm{Tr}(\rho_c(\pi))=|Z(\pi)|,
\]
where $Z(\pi)$ is the centralizer of $\pi \in S_n$. Note that all elements $\pi\in S_n$ in the same conjugacy class $C$ have the same centralizer which we denote $Z(C)$.

We can decompose the conjugation representation into irreducibles as $\rho_c \simeq \bigoplus_{\lambda \vdash n} m_{\lambda} \rho_{\lambda}$ for some multiplicities $m_\lambda$. Taking the trace gives 
\[
\chi_c(\pi)=\sum_{\lambda \vdash n} m_{\lambda}\chi_{\lambda}(\pi)
\]
for all $\pi\in S_n$. Using orthogonality of the irreducible characters we arrive at
\begin{align}
    m_\lambda &= \frac{1}{n!} \sum_{g \in G} \chi_\lambda(g) \chi_{c}(g) = \frac{1}{n!} \sum_{C \vdash n} \chi_\lambda(C) |C| |Z(C)| \label{eq:rowsumC},
\end{align}
where $|C|$ is the order of the conjugacy class $C$. By Lagrange's theorem, $n! / |C| = |Z(C)|$, and substituting this into Eq.~\eqref{eq:rowsumC} shows that $m_{\lambda}=R_{\lambda}$ and completes the proof.

\end{proof}
\begin{lemma}
$R_{\lambda}=d_{\lambda}^{-1}\mathrm{Tr}(\Pi_{\lambda}^{c})$
\end{lemma}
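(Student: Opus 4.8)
The plan is to compute $\mathrm{Tr}(\Pi_\lambda^c)$ directly from the defining formula Eq.~\eqref{eq:pilambda} and then quote the preceding lemma. Specializing Eq.~\eqref{eq:pilambda} to $\rho\to\rho_c$ gives $\Pi_\lambda^c=\frac{d_\lambda}{n!}\sum_{\alpha\in S_n}\chi_\lambda(\alpha)\rho_c(\alpha)$, so by linearity of the trace
\[
\mathrm{Tr}(\Pi_\lambda^c)=\frac{d_\lambda}{n!}\sum_{\alpha\in S_n}\chi_\lambda(\alpha)\,\mathrm{Tr}(\rho_c(\alpha))=\frac{d_\lambda}{n!}\sum_{\alpha\in S_n}\chi_\lambda(\alpha)\,\chi_c(\alpha),
\]
where $\chi_c$ is the character of the conjugation representation $\rho_c$.

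Next I would identify the remaining sum with $d_\lambda$ times the multiplicity $m_\lambda$ of $\rho_\lambda$ in $\rho_c$. Writing $\rho_c\simeq\bigoplus_\mu m_\mu\rho_\mu$ and applying orthogonality of the irreducible characters exactly as in the proof of the preceding lemma yields $m_\lambda=\frac{1}{n!}\sum_{\alpha\in S_n}\chi_\lambda(\alpha)\chi_c(\alpha)$, which is precisely the quantity appearing above; hence $\mathrm{Tr}(\Pi_\lambda^c)=d_\lambda m_\lambda$. The preceding lemma states $m_\lambda=R_\lambda$, so $\mathrm{Tr}(\Pi_\lambda^c)=d_\lambda R_\lambda$, and dividing by $d_\lambda$ (which is nonzero) gives $R_\lambda=d_\lambda^{-1}\mathrm{Tr}(\Pi_\lambda^c)$.

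There is no real obstacle here: the only point requiring care is that the character-orthogonality identity $m_\lambda=\frac{1}{n!}\sum_\alpha\chi_\lambda(\alpha)\chi_c(\alpha)$ used above is exactly the one already derived en route to the preceding lemma, so no new computation is needed. Alternatively, one can bypass characters altogether: from the properties established in the first appendix, $\Pi_\lambda^c$ is the orthogonal projector onto the $\rho_\lambda$-isotypic component of the representation $\rho_c$; since that component is isomorphic to $m_\lambda$ copies of an irrep of dimension $d_\lambda$, its dimension—equivalently $\mathrm{Tr}(\Pi_\lambda^c)$, the trace of a projector being the dimension of its range—is $d_\lambda m_\lambda=d_\lambda R_\lambda$ by the preceding lemma.
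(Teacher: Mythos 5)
Your proof is correct and follows essentially the same route as the paper's: both take the trace of $\Pi_\lambda^c$ using the defining formula, recognize $\frac{1}{n!}\sum_\alpha\chi_\lambda(\alpha)\chi_c(\alpha)$ as the multiplicity from the preceding lemma (the paper cites its Eq.~\eqref{eq:rowsumC} directly), and conclude. Your closing remark--—that $\Pi_\lambda^c$ projects onto the $\rho_\lambda$-isotypic component of $\rho_c$, so its trace is the dimension $d_\lambda m_\lambda$ of that component—--is a clean conceptual paraphrase rather than a genuinely different argument.
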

\begin{proof}
By definition, 
\[
\Pi_{\lambda}^{c}=\frac{d_{\lambda}}{n!}\sum_{g\in S_n}\chi_{\lambda}(g)\rho_c(g),
\]
and therefore
\[
\mathrm{Tr}(\Pi_{\lambda}^{c})=\frac{d_{\lambda}}{n!}\sum_{g\in S_n}\chi_{\lambda}(g)\chi_c(g) = d_\lambda R_\lambda,
\]
by Eq. \eqref{eq:rowsumC}
\end{proof}
\end{document}